\newtheorem{theorem}{Theorem}
\newtheorem{example}{Example}
\newtheorem{remark}{Remark}
\newtheorem{lemma}{Lemma}
\newtheorem{proposition}{Proposition}
\begin{document}

\title{Optimal cyclic $(r,\delta)$ locally repairable codes with unbounded length\thanks{This paper was presented in part at the 2018 IEEE Information Theory Workshop (ITW).}}
\author{\small  Weijun Fang\thanks{Corresponding Author} $^{,\natural}$  \   \ Fang-Wei Fu$^{\natural}$ \\
\small $^\natural$  Chern Institute of Mathematics and LPMC, Nankai University, Tianjin, China\\
\small Emails: nankaifwj@163.com, fwfu@nankai.edu.cn\\
}
\date{}
\maketitle
\thispagestyle{empty}
\begin{abstract}
Locally repairable codes with locality $r$ ($r$-LRCs for short) were introduced by Gopalan et al.  \cite{1} to recover a failed node of the code from at most other $r$  available nodes. And then $(r,\delta)$ locally repairable codes ($(r,\delta)$-LRCs for short) were produced by Prakash et al. \cite{2} for tolerating multiple failed nodes. An $r$-LRC can be viewed as an $(r,2)$-LRC. An $(r,\delta)$-LRC is called optimal if it achieves the Singleton-type bound. It has been a great challenge to construct $q$-ary optimal $(r,\delta)$-LRCs with length much larger than $q$. Surprisingly, Luo et al. \cite{3} presented a construction of $q$-ary optimal $r$-LRCs of minimum distances 3 and 4 with unbounded
lengths (i.e., lengths of these codes are independent of $q$) via cyclic codes.

In this paper, inspired by the work of \cite{3}, we firstly construct two classes of optimal cyclic $(r,\delta)$-LRCs with unbounded
lengths and  minimum distances $\delta+1$ or $\delta+2$, which generalize the results about the $\delta=2$ case given in \cite{3}. Secondly, with a slightly stronger condition, we present a construction of optimal cyclic $(r,\delta)$-LRCs with unbounded
length and larger minimum distance $2\delta$. Furthermore, when $\delta=3$, we give another class of optimal cyclic $(r,3)$-LRCs with unbounded
length and minimum distance $6$.
\end{abstract}

\small\textbf{Keywords:} Distributed storage systems, locally repairable codes, Singleton-type bounds, optimal cyclic LRCs

\maketitle

\section{Introduction}

Motivated by applications to distributed storage, locally repairable codes (LRC) were introduced by Gopalan et al.  \cite{1}, which have attracted great
attention of researchers recently. Such repair-efficient codes are already used in the Hadoop
Distributed File System RAID  by Facebook and Windows
Azure Storage \cite{18,19}. Let $C$ be an $[n,k,d]$ linear code. The $i$-th code symbol of $C$
is said to have locality $r$ ($1 \leq r \leq k$) if it can be recovered
by accessing at most $r$ other symbols in $C$, i.e., the $i$-th code symbol can be expressed as a linear combination of $r$ other symbols.  If all
symbols of $C$ have locality $r$, then $C$ is called an $r$-LRC. Any $r$-LRC has to satisfy the Singleton-type bound, which was proposed in \cite{1}:
\begin{equation}\label{1}
  d \leq n - k - \lceil\frac{k}{r}\rceil +2.
\end{equation}

When  multiple
node failures occur in a distributed storage system, the local recovery process for a failed node may not proceed
successfully. In order to overcome this problem, Prakash et al. \cite{2} introduced the
concept of $(r,\delta)$-locality, which generalize the $r$-locality. The $i$-th code symbol of $C$
is said to have locality $(r,\delta)$ ($1 \leq r \leq k$ and $\delta \geq 2$), if there exists a subset $R_{i} \subseteq \{1, 2, \ldots, n \}$ such that $i \in R_{i}, |R_{i}| \leq r+\delta-1$ and the punctured code $C|_{R_{i}}$ has minimum distance $d(C|_{R_{i}}) \geq \delta$.  And $C$ is called an $(r, \delta)$-LRC if all
nodes of $C$ have locality $(r, \delta)$. When $\delta=2$, it is easy to see that an $(r, \delta)$-LRC degenerates to an $r$-LRC. For an $[n, k, d]$ linear code with $(r, \delta)$-locality, Prakash et al. \cite{2} gave the
following Singleton-type bound:
\begin{equation}\label{2}
  d \leq n - k + 1- (\lceil\frac{k}{r}\rceil-1)(\delta-1).
\end{equation}
An $[n, k, d]$-LRC with locality $(r,\delta)$ (resp. $r$) is called optimal if it achieves the bound (2) (resp. (1)).

Lots of works have been proposed for construction of optimal LRCs (\cite{2,3,4,5,6,7,8,9,10,11,12,13,14}). In \cite{11}, optimal  $(r, \delta)$-LRCs were constructed with alphabet size which is exponential
in code length $n$. A class of optimal $(r,\delta)$-LRCs with length $n=\lceil\frac{k}{r}\rceil(r+\delta-1)$ was obtained in \cite{2} for $n < q$. A breakthrough construction given in \cite{4} produces a family of optimal $(r, \delta)$-LRCs via subcodes of Reed-Solomon codes. The length of these codes can go up to the alphabet size. By employing the techniques of cyclic MDS codes \cite{16}, Chen et al. \cite{5} obtained several classes of $q$-ary optimal cyclic $(r, \delta)$-LRCs with length $n \mid (q+1)$. In \cite{6}, Jin et al. also constructed a family of $q$-ary optimal $r$-LRCs with length up to $q+1$ by using the automorphism group
of rational function fields. By studying the algebraic structures of elliptic curves, Ma et al. \cite{7} construct a family of $q$-ary optimal $r$-LRCs of length up to $q+2\sqrt{q}$. One natural question is that how long can an optimal LRC be? Surprisingly, it was shown in \cite{3} that there exist optimal cyclic $r$-LRCs with unbounded lengths and minimum distances 3 or 4.

In this paper, we generalize the work of \cite{3} to the $(r,\delta)$-LRCs for general $\delta \geq 2$. Firstly, we construct two classes of optimal cyclic $(r,\delta)$-LRCs with unbounded
length and  minimum distances $\delta+1$ and $\delta+2$.  The main results of \cite{3} then can be seen as the $\delta=2$ case of ours. Secondly, under a slightly stronger condition, we present a construction of optimal cyclic $(r,\delta)$-LRCs with unbounded
length and larger minimum distance $2\delta$. When $\delta=3$, with a modification of this construction, we construct another class of optimal cyclic $(r,3)$-LRCs with unbounded length and minimum distance 6. More precisely, we have the following
main results in this paper.
\begin{itemize}
  \item (i) Suppose $\gcd(q, n)=1$. Let $r, \delta \geq 2$ with $(r+\delta-1) \mid \gcd(n, q-1)$, then there exists a $q$-ary optimal cyclic $(r, \delta)$-LRC with length $n$ and minimum distance $\delta+1$.
  \item  (ii) Suppose $\gcd(q, n)=1$. Let $r \geq 3, \delta \geq 2$ with $(r+\delta-1) \mid \gcd(n, q-1)$, and  $\gcd(\frac{n}{r+\delta-1}, r+\delta-1) \mid \delta$, then there exists a $q$-ary optimal cyclic $(r, \delta)$-LRC with length $n$ and minimum distance $\delta+2$.
  \item (iii) Suppose $\gcd(q, n)=1$. Let $r \geq \delta+1$ with $(r+\delta-1) \mid \gcd(n, q-1)$, and  $\gcd(\frac{n}{r+\delta-1}, r+\delta-1) =1$, then there exists a $q$-ary optimal cyclic $(r, \delta)$-LRC with length $n$ and minimum distance $2\delta$.
  \item (iv) Suppose $n$ is odd and $\gcd(q, n)=1$. Let $r \geq 4$ with $(r+2) \mid \gcd(n, q+1)$, and  $\gcd(\frac{n}{r+2}, r+2) =1$, then there exists a $q$-ary optimal cyclic $(r, 3)$-LRC with length $n$ and minimum distance $6$.
\end{itemize}

The rest of this paper is organized as follows. In Section 2, we review some preliminaries on cyclic codes and present some basic results of cyclic $(r, \delta)$-LRCs. In Section 3, we present our constructions of optimal cyclic $(r, \delta)$-LRCs. We use some conclusions to end this paper in Section 4.

\section{Preliminaries}

In this section, we review some preliminaries on cyclic codes and present some basic results of cyclic $(r, \delta)$-LRCs.

\subsection{Cyclic codes}
Throughout this paper, we let $q$ be a prime power and $\mathbb{F}_{q}$ be a finite field with size $q$. A linear code $C$ of length $n$ over $\mathbb{F}_{q}$ is called cyclic if $(c_{0}, c_{1}, \ldots, c_{n - 1}) \in C$ implies that $( c_{n - 1}, c_{0}, \ldots, c_{n-2}) \in C$. It is well-known that a $q$-ary cyclic code $C$ of length $n$ can be identified with an ideal of
the ring  $\mathbb{F}_{q}[x]/(x^{n} - 1)$, where
$\gcd(n, q) = 1$. Since $\mathbb{F}_{q}[x]/(x^{n} - 1)$ is principal, every ideal of $\mathbb{F}_{q}[x]/(x^{n} - 1)$ is generated by a monic polynomial $g(x)$ with $g(x) \mid (x^{n} - 1)$.  $g(x)$ is called the generator polynomial of $C$.

Let $s$ be the order of $q$ modulo $n$, i.e., the least number of $i$ such that $n \mid (q^{i}-1)$. Then $\mathbb{F}_{q^{s}}$ is a splitting field of $x^{n}-1$. Let $\xi \in \mathbb{F}_{q^{s}}$ be a primitive $n$-th root of unity. Let $C$ be
a $q$-ary $[n, k, d]$-cyclic code with generator polynomial $g(x)$. The
zeros set $Z = \{ \xi^{i_{j}} \mid g(\xi^{i_{j}}) = 0,j = 1, 2, \ldots, n-k \}$ of $g(x)$
is called the complete defining set of $C$. The following lemma is a simple generalization of the well-known BCH Bound.

\begin{lemma}(Generalized BCH Bound, \cite{15})
Let $C$ be a $q$-ary cyclic code of length $n$, where $\gcd(n,q)=1$. Let $g(x)$ be the generator
polynomial of $C$, and $\xi$ be a primitive $n$-th root of unity. If
$g(x$) has $\xi^{u},\xi^{u+b}, \ldots, \xi^{u+(d-2)b}$ among its zeros, where $u$ is an integer and $\gcd(b,n)=1$. Then the minimum
distance of $C$ is at least $d$.
\end{lemma}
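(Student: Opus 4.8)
The plan is to reduce the statement to the classical BCH bound, or equivalently to run the Vandermonde argument directly; I will sketch the self-contained direct route.

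First I would suppose, for contradiction, that $C$ contains a nonzero codeword $c = (c_{0}, c_{1}, \ldots, c_{n-1})$ of Hamming weight $w \le d-1$, and let $\mathrm{supp}(c) = \{k_{1}, \ldots, k_{w}\} \subseteq \{0, 1, \ldots, n-1\}$ be its support. Identifying $c$ with the polynomial $c(x) = \sum_{\ell=1}^{w} c_{k_{\ell}} x^{k_{\ell}}$, the hypothesis that $\xi^{u}, \xi^{u+b}, \ldots, \xi^{u+(d-2)b}$ are zeros of $g(x)$ (hence of every codeword) gives $c(\xi^{u+jb}) = 0$ for $j = 0, 1, \ldots, d-2$. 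Keeping the first $w$ of these equations (which is legitimate since $w - 1 \le d - 2$) and writing $\eta_{\ell} := \xi^{b k_{\ell}}$, they become $\sum_{\ell=1}^{w} \bigl(c_{k_{\ell}}\xi^{u k_{\ell}}\bigr)\, \eta_{\ell}^{\,j} = 0$ for $j = 0, 1, \ldots, w-1$, a homogeneous linear system in the unknowns $c_{k_{\ell}}\xi^{u k_{\ell}}$ whose coefficient matrix is the $w \times w$ Vandermonde matrix $\bigl(\eta_{\ell}^{\,j}\bigr)_{0 \le j \le w-1,\; 1 \le \ell \le w}$.

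The key step -- and the only place the coprimality hypothesis is used -- is to argue that the nodes $\eta_{1}, \ldots, \eta_{w}$ are pairwise distinct. Since $\xi$ is a primitive $n$-th root of unity and $\gcd(b, n) = 1$, the element $\xi^{b}$ is again a primitive $n$-th root of unity; hence $\eta_{\ell} = \eta_{\ell'}$ forces $b k_{\ell} \equiv b k_{\ell'} \pmod{n}$, i.e. $k_{\ell} \equiv k_{\ell'} \pmod{n}$, i.e. $\ell = \ell'$ because the support indices lie in $\{0,\ldots,n-1\}$. Therefore the Vandermonde determinant $\prod_{\ell < \ell'} (\eta_{\ell'} - \eta_{\ell})$ is nonzero, so the only solution of the system is the trivial one, forcing $c_{k_{\ell}}\xi^{u k_{\ell}} = 0$ and hence $c_{k_{\ell}} = 0$ for all $\ell$ -- contradicting $c \ne 0$. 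Thus $d(C) \ge d$.

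I expect the only real obstacle to be bookkeeping: verifying that the $w$ chosen evaluation points exist and spelling out why $\gcd(b,n)=1$ makes $i \mapsto bi \bmod n$ a bijection of $\mathbb{Z}/n\mathbb{Z}$; neither is genuinely hard. An equivalent and perhaps cleaner write-up would replace the direct computation by the observation that the multiplier permutation $\sigma_{b}\colon (c_{0},\ldots,c_{n-1}) \mapsto (c_{0},c_{b},c_{2b},\ldots)$ (indices mod $n$) is a weight-preserving automorphism of $\mathbb{F}_{q}[x]/(x^{n}-1)$ that carries $C$ onto a cyclic code whose defining set contains the $d-1$ consecutive powers $\xi^{b^{-1}u}, \xi^{b^{-1}u+1}, \ldots, \xi^{b^{-1}u + d - 2}$, to which the ordinary BCH bound applies; since $\sigma_{b}$ preserves Hamming weights, the bound transfers back to $C$.
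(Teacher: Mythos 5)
The paper does not actually prove this lemma --- it is quoted directly from \cite{15} and used without an in-paper argument --- so there is no internal proof to compare against; I will just assess your argument on its own terms. Your main (Vandermonde) argument is correct and is the standard one. Writing the $w$ parity checks as $\sum_{\ell}\bigl(c_{k_\ell}\xi^{uk_\ell}\bigr)\eta_\ell^{\,j}=0$ with $\eta_\ell=(\xi^b)^{k_\ell}$ is exactly right, and you correctly isolate the one place $\gcd(b,n)=1$ is needed: it makes $\xi^b$ a primitive $n$-th root of unity, so the nodes $\eta_\ell$ are pairwise distinct and the Vandermonde determinant is nonzero, forcing $c=0$. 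Using only the first $w$ of the $d-1$ available evaluation points is legitimate because $w\le d-1$.

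One small slip in your alternative multiplier sketch. With $\sigma_b$ defined by $\sigma_b(c)_i=c_{bi \bmod n}$, a direct computation gives $\sigma_b(c)(\xi^j)=c(\xi^{b^{-1}j})$, so the defining set of $\sigma_b(C)$ is $b$ \emph{times} the defining set of $C$; the run $\{u+tb: 0\le t\le d-2\}$ is carried to $\{bu+tb^2\}$, which is \emph{not} a consecutive run unless $b^2\equiv 1\pmod n$. To land on the consecutive set $\{b^{-1}u+t: 0\le t\le d-2\}$ you should instead use the multiplier $\sigma_{b^{-1}}$, i.e.\ $\tilde c_i=c_{b^{-1}i\bmod n}$, which satisfies $\tilde c(\xi^j)=c(\xi^{bj})$. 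After that correction the reduction to the classical BCH bound works as you intend, and in either case the lemma is proved by your first argument.
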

\subsection{Cyclic $(r,\delta)$-LRCs}
When $(r+\delta-1) \nmid n$, it was proved in \cite[Theorem 10]{10} that there is no $(r, \delta)$-LRCs with $r \mid k$ achieving the bound (2). Thus, throughout this paper,
we assume that $(r + \delta- 1) \mid n$ and $\gcd(n,q)=1$. Let $\xi \in \mathbb{F}_{q^{s}}$ be
a primitive $n$-th root of unity, where $\mathbb{F}_{q^{s}}$ is the   splitting field of $x^{n}-1$.

When $n \mid (q-1)$, Tamo et al. provided a useful condition to ensure a cyclic code
has locality $r$ in \cite[Proposition 3.4]{10}. B. Chen et al. \cite{5} then generalized
their results to the cyclic $(r, \delta)$-LRCs of length $n \mid (q-1)$ or $(q+1)$. Actually, their results can be easily generalized for general $n$ with $\gcd(n,q)=1$, which are presented as follows.

\begin{lemma}
Suppose that $\gcd(n,q)=1$, $(r+\delta-1) \mid n$ and $\rho=\frac{n}{r+\delta-1}$. Let $\ell_{1} < \ell_{2} < \cdots < \ell_{\delta- 1}$
be an arithmetic progression with $\delta - 1$ items and common
difference $b$, where $\gcd(b, n) = 1$. Consider a $(\delta-1)\rho \times n$ matrix
$H$ with the rows
\[h_{j}^{(\ell_{i})}=(1,\xi^{j(r + \delta- 1)+\ell_{i}},\xi^{2(j(r + \delta- 1)+\ell_{i})},\ldots,\xi^{(n-1)(j(r + \delta- 1)+\ell_{i})}),\]
where $i=1, 2, \ldots, \delta-1$, $j=1, 2, \ldots, \rho$. Then
all the cyclic shifts of the row vectors of weight $r + \delta-1$ in
the following $(\delta- 1) \times n$-matrix
\[V=\left(
    \begin{array}{ccccccc}
      10\cdots0 & \xi^{\rho \ell_{1}} & 0\cdots0 & (\xi^{\rho \ell_{1}})^{2} & 0\cdots0 & (\xi^{\rho \ell_{1}})^{r+\delta-2} & 0\cdots0 \\
      10\cdots0 & \xi^{\rho \ell_{2}} & 0\cdots0 & (\xi^{\rho \ell_{2}})^{2} & 0\cdots0 & (\xi^{\rho \ell_{2}})^{r+\delta-2} & 0\cdots0 \\
      \vdots & \vdots & \vdots & \vdots & \ddots & \vdots & \vdots \\
       10\cdots0 & \xi^{\rho \ell_{\delta-1}} & 0\cdots0 & (\xi^{\rho \ell_{\delta-1}})^{2} & 0\cdots0 & (\xi^{\rho \ell_{\delta-1}})^{r+\delta-2} & 0\cdots0 \\
    \end{array}
  \right)\]
are contained in the row space of $H$ over $\mathbb{F}_{q^{s}}$.
\end{lemma}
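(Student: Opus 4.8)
The plan is to pass everything through the finite Fourier transform: the rows of $H$ are evaluation vectors of monomials, the rows of $V$ are ``sparse'' evaluation vectors, and the desired containment will reduce to the nonsingularity of a $\rho\times\rho$ Vandermonde matrix. First I would set $\gamma:=\xi^{r+\delta-1}$; since $\xi$ has order $n=\rho(r+\delta-1)$, the element $\gamma$ is a primitive $\rho$-th root of unity. For an integer $m$, write $\chi_m$ for the function on $\mathbb{Z}$ with $\chi_m(t)=1$ when $t\equiv m\pmod{\rho}$ and $\chi_m(t)=0$ otherwise. The key observation about $V$ is the identity $(\xi^{\rho\ell_i})^{k}=\xi^{\ell_i(k\rho)}$, which shows that the $i$-th row $v^{(\ell_i)}$ of $V$ is exactly the vector of length $n$ whose $t$-th coordinate equals $\xi^{\ell_i t}\chi_0(t)$. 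Consequently the cyclic shift of $v^{(\ell_i)}$ by an amount $m$ is the vector $w^{(i,m)}$ whose $t$-th coordinate equals $\xi^{\ell_i(t-m)}\chi_m(t)=\xi^{-\ell_i m}\xi^{\ell_i t}\chi_m(t)$. It then suffices to show that each such $w^{(i,m)}$ lies in the row space of $H$ over $\mathbb{F}_{q^s}$, for every $i\in\{1,\dots,\delta-1\}$ and every $m\in\{0,1,\dots,n-1\}$.

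For the main step I would fix $i$ and work only with the $\rho$ rows $h_{1}^{(\ell_i)},\dots,h_{\rho}^{(\ell_i)}$ of $H$. A direct computation shows that for scalars $\lambda_1,\dots,\lambda_\rho\in\mathbb{F}_{q^s}$ the $t$-th coordinate of $\sum_{j=1}^{\rho}\lambda_j h_j^{(\ell_i)}$ equals $\xi^{\ell_i t}\sum_{j=1}^{\rho}\lambda_j\gamma^{tj}$; since $\gamma^{\rho}=1$, the inner sum depends on $t$ only through $t\bmod\rho$. I would therefore try to choose the $\lambda_j$ so that $\sum_{j=1}^{\rho}\lambda_j\gamma^{tj}=\xi^{-\ell_i m}\chi_m(t)$ for all $t$; with such a choice the linear combination $\sum_{j=1}^{\rho}\lambda_j h_j^{(\ell_i)}$ has $t$-th coordinate $\xi^{\ell_i t}\xi^{-\ell_i m}\chi_m(t)=w^{(i,m)}_t$, i.e.\ it equals $w^{(i,m)}$.

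What remains is to justify that such $\lambda_j$ exist, and this is the only place with any content. The constraint amounts to the linear system $\sum_{j=1}^{\rho}\lambda_j\gamma^{uj}=\xi^{-\ell_i m}\chi_m(u)$, $u=0,1,\dots,\rho-1$, whose coefficient matrix is $\big(\gamma^{uj}\big)_{0\le u\le\rho-1,\;1\le j\le\rho}$. Since $\gamma$ is a primitive $\rho$-th root of unity, the nodes $\gamma^{1},\gamma^{2},\dots,\gamma^{\rho-1},\gamma^{\rho}=1$ are pairwise distinct, so this matrix is a nonsingular Vandermonde matrix and the system is solvable over $\mathbb{F}_{q^s}$. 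Assembling the three steps exhibits $w^{(i,m)}$ inside the $\mathbb{F}_{q^s}$-span of $\{h_j^{(\ell_i)}\}_{j=1}^{\rho}$, hence inside the row space of $H$, which is the claim. I do not anticipate any genuine obstacle: the argument is essentially self-contained, the one thing to watch being index bookkeeping, namely that the value $j=\rho$ must be included since it supplies the constant column of the Vandermonde matrix. It is also worth noting that neither the arithmetic-progression structure of the $\ell_i$ nor the hypothesis $\gcd(b,n)=1$ is used in this lemma; those conditions enter only afterwards, when one applies the generalized BCH bound to certify that the local codes obtained from $V$ have minimum distance at least $\delta$.
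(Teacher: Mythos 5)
Your proof is correct and follows the same underlying idea as the paper's --- exploiting the orthogonality structure of the $\rho$-th roots of unity to project onto a fixed coset modulo $\rho$ --- but your implementation is more explicit and slightly different in organization. The paper proves the claim in two steps: it first takes the \emph{average} $\frac{1}{\rho}\sum_{j=1}^{\rho}h_j^{(\ell_i)}$ and observes, via the character sum $\frac{1}{\rho}\sum_{j}\xi^{m(j(r+\delta-1)+\ell_i)}=\xi^{m\ell_i}$ if $\rho\mid m$ and $0$ otherwise, that this average is exactly the $i$-th row of $V$ (the shift $m=0$ case); it then invokes that the row space of $H$ is closed under cyclic shifts (since each row of $H$ is an eigenvector of the shift operator) to get all other shifts for free. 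You instead handle every shift $m$ at once by solving the interpolation problem $\sum_j\lambda_j\gamma^{tj}=\xi^{-\ell_i m}\chi_m(t)$ and appealing to nonsingularity of the Vandermonde matrix in the distinct nodes $\gamma^1,\dots,\gamma^\rho$; the paper's choice $\lambda_j\equiv\frac{1}{\rho}$ is the special case $m=0$ of your system. Your route avoids the abstract ``row space is shift-invariant'' observation at the cost of explicitly writing down the linear system; the paper's route is shorter but requires that separate structural remark. Both are sound. Your closing observation --- that the arithmetic-progression structure of the $\ell_i$ and the hypothesis $\gcd(b,n)=1$ are not actually used in this lemma, only in the subsequent Proposition --- is accurate and matches the paper's proof, which likewise never invokes them here.
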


\begin{proof}
Note that for any $i=1, 2, \ldots, \delta$ and $m=0, 2, \ldots, n-1$,
\[\frac{1}{\rho}\sum_{j=1}^{\rho}\xi^{m(j(r+\delta-1)+\ell_{i})}=\left\{\begin{array}{cc}
                                                                       0 & \textnormal{ if }\rho \nmid m; \\
                                                                       \xi^{m\ell_{i}} & \textnormal{ if } \rho \mid m.
                                                                     \end{array}\right.
\]
Thus the row vectors of $V$ are contained in the row space of $H$ over $\mathbb{F}_{q^{s}}$. It is easy to see that the
row space of $H$ over $\mathbb{F}_{q^{s}}$ is closed under cyclic shifts, thus the lemma follows.
\end{proof}

\begin{proposition}
Suppose that $\gcd(n,q)=1$, $(r+\delta-1) \mid n$ and $\rho=\frac{n}{r+\delta-1}$. Let $C$ be a cyclic code of length $n$ over $\mathbb{F}_{q}$ with complete defining set $Z$. Let $\ell_{1} < \ell_{2} < \cdots < \ell_{\delta- 1}$
be an arithmetic progression with $\delta - 1$ items and common
difference $b$, where $\gcd(b, n) = 1$.  If $Z$ contains some cosets of the group of $\rho$-th
roots of unity $\bigcup_{\ell}L_{\ell}$, where
\[L_{\ell}=\{\xi^{i} \mid i\textnormal{ mod }(r+\delta -1) =\ell \}, \ell=\ell_{1}, \ell_{2}, \ldots, \ell_{\delta-1},\]
then $C$ has $(r, \delta)$-locality.
\end{proposition}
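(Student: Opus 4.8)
The plan is to show that the local repair sets $R_i$ can be taken to be the supports of the weight-$(r+\delta-1)$ codewords produced by Lemma~2, and that each such punctured code has minimum distance at least $\delta$. First I would fix a coset $L_\ell$ as in the statement and observe that the hypothesis ``$Z \supseteq \bigcup_\ell L_\ell$'' means precisely that the dual containment needed for Lemma~2 holds: the rows $h_j^{(\ell_i)}$ all correspond to zeros of $C$, hence each $h_j^{(\ell_i)}$ lies in the (extended) parity-check space of $C$ over $\mathbb{F}_{q^s}$. By Lemma~2, every cyclic shift of a weight-$(r+\delta-1)$ row of the matrix $V$ lies in the row space of $H$, and therefore is a parity check for $C$ (a short Galois-descent remark is needed here: although $V$ and $H$ have entries in $\mathbb{F}_{q^s}$, the orthogonality relation with codewords of $C$ is an $\mathbb{F}_{q^s}$-linear condition that is implied by membership in the $\mathbb{F}_{q^s}$-span of genuine parity checks of $C$).

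Next I would pin down the combinatorial structure of these weight-$(r+\delta-1)$ rows. The matrix $V$ has its nonzero entries confined to a fixed set of $r+\delta-1$ coordinate positions that form an arithmetic progression of step $\rho$ in $\mathbb{Z}_n$ (namely positions $0, \rho, 2\rho, \ldots, (r+\delta-2)\rho$ in the notation of Lemma~2); call this position set $R_1$, and let $R_i$ be its cyclic shifts, so $|R_i| = r+\delta-1$ and the $R_i$ cover all $n$ coordinates. Restricting the $\delta-1$ rows of $V$ to $R_1$ gives a $(\delta-1)\times(r+\delta-1)$ matrix whose columns are $\bigl(1, \xi^{\rho\ell_i}, (\xi^{\rho\ell_i})^2, \ldots, (\xi^{\rho\ell_i})^{\delta-2}\bigr)^{T}$-type vectors; more precisely the $j$-th column ($j=0,\dots,r+\delta-2$) is $\bigl((\xi^{\rho\ell_1})^{j}, (\xi^{\rho\ell_2})^{j}, \ldots, (\xi^{\rho\ell_{\delta-1}})^{j}\bigr)^{T}$. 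Because $\ell_1,\dots,\ell_{\delta-1}$ is an arithmetic progression with common difference $b$ coprime to $n$, the quantities $\xi^{\rho\ell_1},\dots,\xi^{\rho\ell_{\delta-1}}$ are $\delta-1$ distinct powers of $\xi^{\rho b}$ in geometric progression, so this matrix is (up to column scaling and permutation) a Vandermonde-type matrix; every $\delta-1$ of its columns are linearly independent. Hence the code over $\mathbb{F}_{q^s}$ whose parity-check matrix is this $(\delta-1)\times(r+\delta-1)$ block is an MDS code of length $r+\delta-1$, dimension $r$, and minimum distance $\delta$.

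Now I would combine the two observations. The punctured code $C|_{R_i}$ is contained in the $\mathbb{F}_{q}$-points of the code over $\mathbb{F}_{q^s}$ cut out by the restricted rows of (the shift of) $V$, since those restricted rows are parity checks for $C$ that happen to be supported only on $R_i$. Therefore $d(C|_{R_i}) \ge \delta$. Also $|R_i| = r+\delta-1 \le r+\delta-1$, trivially, and the $R_i$ cover $\{1,\dots,n\}$ because the shifts of an arithmetic progression of length $r+\delta-1$ and step $\rho$ in $\mathbb{Z}_n$ hit every residue (here $\gcd(\rho\cdot 1,\ldots)$ considerations: the first block alone is $\{0,\rho,\dots,(r+\delta-2)\rho\}$ and its $n$ cyclic shifts clearly cover $\mathbb{Z}_n$). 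This exhibits an $(r,\delta)$-locality set for every coordinate, so $C$ has $(r,\delta)$-locality.

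The main obstacle I anticipate is the Galois-descent bookkeeping in the first step: one must be careful that ``row vectors of $V$ lie in the row space of $H$ over $\mathbb{F}_{q^s}$'' genuinely forces the \emph{support-restricted} vectors to annihilate $C$ over the small field $\mathbb{F}_q$. The clean way is to note that for $\mathbf{c}\in C$ and any $\mathbb{F}_{q^s}$-linear combination $\mathbf{v}$ of the $h_j^{(\ell_i)}$ we have $\mathbf{v}\cdot\mathbf{c}=0$ (since each $h_j^{(\ell_i)}\cdot\mathbf{c}=0$ as $g(\xi^{j(r+\delta-1)+\ell_i})=0$), and this is exactly what is needed; no descent to $\mathbb{F}_q$ of the vector $\mathbf{v}$ itself is required, only the scalar identity $\mathbf{v}\cdot\mathbf{c}=0$. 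Once that is phrased correctly, the remaining Vandermonde/MDS argument for $d(C|_{R_i})\ge\delta$ and the covering argument for the $R_i$ are routine.
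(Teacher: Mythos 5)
Your proposal is correct and follows essentially the same route as the paper: invoke Lemma~2 to obtain parity checks of $C$ supported on the $r+\delta-1$ positions forming an arithmetic progression of step $\rho$ (and its cyclic shifts), show the restricted $(\delta-1)\times(r+\delta-1)$ matrix is (generalized) Vandermonde and hence the parity-check matrix of an $[r+\delta-1,r,\delta]$ MDS code, and conclude $(r,\delta)$-locality. If anything, you spell out two points the paper leaves terse — the reduction of each $(\delta-1)$-column minor to a true Vandermonde via the factorization $(\xi^{\rho\ell_i})^j=\xi^{\rho\ell_1 j}(\xi^{\rho b j})^{i-1}$, and the observation that only the scalar orthogonality $\mathbf{v}\cdot\mathbf{c}=0$ over $\mathbb{F}_{q^s}$ is needed rather than any descent of $\mathbf{v}$ to $\mathbb{F}_q$ — whereas the paper defers the final step to the analogue in~\cite{5}.
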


\begin{proof}
Note that
\[L_{\ell}=\{\xi^{j(r + \delta- 1)+\ell}\}_{j=1}^{\rho}.\] Let $H$ and $V$ be the matrices defined in Lemma 2, then $H$ forms a parity-check matrix of the cyclic code $C$. By Lemma 2, $V$ is contained in the row space of the parity check matrix $H$.  Let $V'$ be the non-zero columns of $V$, i.e.,
\[V'=\left(
  \begin{array}{ccccc}
    1 & \xi^{\rho \ell_{1}} &(\xi^{\rho \ell_{1}})^{2} & \cdots & (\xi^{\rho \ell_{1}})^{r+\delta-2} \\
   1 & \xi^{\rho \ell_{2}} &(\xi^{\rho \ell_{2}})^{2} & \cdots & (\xi^{\rho \ell_{2}})^{r+\delta-2} \\
    \vdots & \vdots & \vdots &\ddots & \vdots \\
    1 & \xi^{\rho \ell_{\delta-1}} &(\xi^{\rho \ell_{\delta-1}})^{2} & \cdots & (\xi^{\rho \ell_{\delta-1}})^{r+\delta-2} \\
  \end{array}
\right).\]
Since $(r+\delta-1) \mid n$ and $\gcd(b, n) = 1$, $\gcd(b, r+\delta-1) = 1$. Thus for any $1 \leq i \neq j \leq \delta-1$, $\ell_{j}-\ell_{i}=b(j-i)$ is not divisible by $r+\delta-1$. Hence $\xi^{\rho\ell_{i}} \neq \xi^{\rho\ell_{j}}$ and $V'$ is a parity-check matrix of an $[r+\delta-1, r, \delta]$ Reed-Solomon code. Then we can obtain that $C$ has $(r, \delta)$-locality similarly as \cite[Proposition 6]{5}.
\end{proof}

\section{Constructions}
In this section, by generalizing the technique proposed in \cite{3}, we provide four classes of $q$-ary optimal $(r, \delta)$-LRCs via cyclic codes. The lengths of these codes are unbounded, i.e., lengths are independent of $q$.

\begin{theorem}
Let $q$ be a prime power and $n$ be positive integer with $\gcd(n, q)=1$. Let $r, \delta \geq 2$ such that $(r+\delta -1) \mid \gcd(n, q-1)$. Then there exists a $q$-ary optimal cyclic $(r, \delta)$-LRC of length $n$ and minimum distance $\delta+1$.
\end{theorem}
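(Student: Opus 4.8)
The plan is to construct the code explicitly via its defining set and then verify the three things we need: that its minimum distance is at least $\delta+1$ (via the Generalized BCH Bound, Lemma 1), that it has $(r,\delta)$-locality (via Proposition 1), and that the resulting parameters $[n,k,d]$ with $d=\delta+1$ meet the Singleton-type bound (2) with equality. Write $v=r+\delta-1$, so $v\mid\gcd(n,q-1)$ and $\rho=n/v$. Since $v\mid q-1$, a primitive $n$-th root of unity $\xi$ lies in $\mathbb{F}_{q}$ already when $n\mid q-1$; in general we only need $v\mid q-1$ to control which cyclotomic cosets appear. I would take the complete defining set to be
\[
Z=L_{1}\cup L_{2}\cup\cdots\cup L_{\delta-1}\cup\{\xi^{0}\},
\]
where $L_{\ell}=\{\xi^{i}\mid i\bmod v=\ell\}$ is the coset of the $\rho$-th roots of unity as in Proposition 1 (here the arithmetic progression is $\ell_{i}=i$ with common difference $b=1$, and $\gcd(1,n)=1$). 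One should check $Z$ is a union of $q$-cyclotomic cosets mod $n$: because $v\mid q-1$, multiplication by $q$ fixes residues modulo $v$, so each $L_{\ell}$ is closed under $x\mapsto qx$, and $\{\xi^{0}\}$ is trivially a coset. Hence $Z$ defines a genuine $q$-ary cyclic code $C$.

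Next, locality: by construction $Z\supseteq L_{1}\cup\cdots\cup L_{\delta-1}$, which is exactly the hypothesis of Proposition 1 with the progression $1,2,\dots,\delta-1$, so $C$ has $(r,\delta)$-locality. For the minimum distance, observe that $Z$ contains $\xi^{0}$ together with all $\xi^{i}$ for which $i\bmod v\in\{1,\dots,\delta-1\}$; in particular $\xi^{0},\xi^{1},\xi^{2},\dots,\xi^{\delta-1}$ are $\delta$ consecutive powers of $\xi$ lying in $Z$ (the $\xi^{1},\dots,\xi^{\delta-1}$ come from $L_{1},\dots,L_{\delta-1}$, and $\xi^{0}$ is adjoined). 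By the Generalized BCH Bound (Lemma 1) with $u=0$, $b=1$, $d-2=\delta-1$, we get $d(C)\ge\delta+1$. The dimension is $k=n-|Z|$; since $|L_{\ell}|=\rho$ for each $\ell$ and these together with $\{\xi^{0}\}$ are pairwise disjoint (the $L_{\ell}$ cover distinct residues $1,\dots,\delta-1$ mod $v$, and $0$ is another residue since $\delta-1<v$), we get $|Z|=(\delta-1)\rho+1$, hence $k=n-(\delta-1)\rho-1=v\rho-(\delta-1)\rho-1=r\rho-1$.

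Finally I would plug into the Singleton-type bound (2). With $k=r\rho-1$ we have $\lceil k/r\rceil=\lceil(r\rho-1)/r\rceil=\rho$, so the right-hand side of (2) is
\[
n-k+1-(\rho-1)(\delta-1)=n-(r\rho-1)+1-(\rho-1)(\delta-1).
\]
A short computation using $n=\rho v=\rho(r+\delta-1)$ reduces this to $\delta+1$, which matches the lower bound $d\ge\delta+1$ just established; hence $d(C)=\delta+1$ and $C$ is optimal. The only subtlety — and the part I would be most careful about — is confirming that $Z$ really is a union of full $q$-cyclotomic cosets modulo $n$ (so that $C$ is $\mathbb{F}_{q}$-rational with exactly the claimed dimension), which is where the hypothesis $v\mid q-1$, as opposed to merely $v\mid n$, is used; everything else is a direct application of Lemma 1 and Proposition 1 plus arithmetic.
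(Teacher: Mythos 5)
Your proposal is correct and matches the paper's proof in all essential respects: same defining set $Z=\{1\}\cup L_1\cup\cdots\cup L_{\delta-1}$, same application of Proposition 1 for locality, same BCH-bound lower estimate $d\geq\delta+1$, and the same dimension/Singleton computation closing the gap. The only cosmetic difference is that the paper exhibits the generator polynomial $g(x)=(x-1)\prod_{i=1}^{\delta-1}(x^\rho-\alpha^i)$ and observes its coefficients lie in $\mathbb{F}_q$ because $\alpha\in\mathbb{F}_q$, whereas you instead verify that $Z$ is a union of $q$-cyclotomic cosets since $q\equiv 1\pmod{r+\delta-1}$ --- these are two wordings of the identical rationality check.
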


\begin{proof}
  Let $\xi \in \mathbb{F}_{q^{s}}$ be
a primitive $n$-th root of unity and $\alpha = \xi^{\rho}$, where $\rho=\frac{n}{r+\delta-1}$.  Since $(r+\delta-1) \mid (q-1)$, $\alpha^{q-1}=(\xi^{\rho})^{q-1}=(\xi^{n})^{\frac{q-1}{r+\delta-1}}=1$, thus $\alpha \in \mathbb{F}_{q}$. Let
\[g(x)=(x-1)(x^{\rho}-\alpha)(x^{\rho}-\alpha^{2})\cdots(x^{\rho}-\alpha^{\delta-1}).\]
Then $g(x)$ is a polynomial over $\mathbb{F}_{q}$ and $g(x) \mid (x^{n}-1)$ since all roots of $g(x)$ are $n$-th roots of unity and they are distinct. Let $C$ be the cyclic code with generator polynomial $g(x)$. Then the dimension of $C$ is $k=n-\deg(g(x))=n-(\rho(\delta-1)+1)=r\rho-1$. Note that the set of the roots of $g(x)$ contains $\bigcup_{\ell=1}^{\delta-1}L_{\ell},$
where
 \[L_{\ell}=\{\xi^{i} \mid i\textnormal{ mod }(r+\delta -1) =\ell \}.\]
By Proposition 1, $C$ has $(r, \delta)$-locality. As $1, \xi, \ldots, \xi^{\delta-1}$ are roots of $g(x)$, the minimum distance $d$ of $C$ is at least $\delta+1$ by Lemma 1. Note that $\lceil\frac{k}{r}\rceil=\lceil\frac{r\rho-1}{r}\rceil=\rho$. By the bound
(2),
\[d \leq n-(r\rho-1)-(\rho-1)(\delta-1)+1=\delta+1.\] Thus $d=\delta+1$. The proof is completed.
\end{proof}

\begin{remark}
\cite[Theorem 1 (1)]{3} can be seen as the $\delta=2$ case of Theorem 1.
\end{remark}

\begin{example}
Let $r=\delta=3$ and $q=11$,  then by Theorem 1, for any $n=5n'$ with $\gcd(n', 11)=1$, there exists an $11$-ary optimal cyclic $(3, 3)$-LRC of length $n$ and minimum distance $4$.
\end{example}

\begin{theorem}
Let $q$ be a prime power and $n$ be a positive integer with $\gcd(n, q)=1$. Let $r \geq 3$ and $\delta \geq 2$ such that $(r+\delta -1) \mid \gcd(n, q-1)$ and $\gcd(\frac{n}{r+\delta -1}, r+\delta -1) \mid \delta$. Then there exists a $q$-ary optimal cyclic $(r, \delta)$-LRC of length $n$ and minimum distance $\delta+2$.
\end{theorem}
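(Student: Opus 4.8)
The plan is to mirror the proof of Theorem 1, enlarging the generator polynomial by exactly one extra root whose position is dictated by the divisibility hypothesis. Keep $\alpha=\xi^{\rho}\in\mathbb{F}_q$ of multiplicative order $r+\delta-1$, where $\rho=n/(r+\delta-1)$. Put $n_0=n/\gcd(n,q-1)$; since $(r+\delta-1)\mid\gcd(n,q-1)$ we have $n_0\mid\rho$, so $\gcd(n_0,r+\delta-1)\mid\gcd(\rho,r+\delta-1)\mid\delta$. By the Chinese Remainder Theorem pick an integer $e$ with $e\equiv\delta\pmod{r+\delta-1}$ and $e\equiv 0\pmod{n_0}$, and set $\beta=\xi^{e}$. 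Then $\beta\in\mathbb{F}_q$ (as $n_0\mid e$), $\beta\neq 1$ (as $e\not\equiv 0\pmod{r+\delta-1}$), and $\beta$ is not a root of $\prod_{i=1}^{\delta-1}(x^{\rho}-\alpha^{i})$ (whose root set is $\bigcup_{i=1}^{\delta-1}L_i$, while $e\bmod(r+\delta-1)=\delta\notin\{1,\dots,\delta-1\}$). Define
\[
g(x)=(x-1)(x-\beta)\prod_{i=1}^{\delta-1}(x^{\rho}-\alpha^{i}),
\]
a polynomial over $\mathbb{F}_q$ dividing $x^{n}-1$, and let $C$ be the cyclic code it generates. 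Then $\deg g=(\delta-1)\rho+2$, so $k=r\rho-2$, whence $\lceil k/r\rceil=\rho$ because $r\ge 3$, and bound (2) gives $d(C)\le\delta+2$. Locality follows from Proposition 1 since the zero set of $g$ contains $\bigcup_{i=1}^{\delta-1}L_i$ (the arithmetic progression $1,2,\dots,\delta-1$, common difference $1$).

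The whole content is the lower bound $d(C)\ge\delta+2$, and I would argue it by ``folding'' codewords along residues mod $\rho$. Take a nonzero $f\in C$ with $\mathrm{wt}(f)\le\delta+1$. Since $\xi^{0},\xi^{1},\dots,\xi^{\delta-1}$ are zeros of $g$, Lemma 1 forces $\mathrm{wt}(f)\ge\delta+1$, hence $\mathrm{wt}(f)=\delta+1$. Write $f(x)=\sum_{s=0}^{\rho-1}x^{s}f_{s}(x^{\rho})$ with $f_{s}(z)=\sum_{t=0}^{r+\delta-2}f_{s+t\rho}z^{t}$. A short discrete‑Fourier‑transform computation over the $\rho$‑th roots of unity $\xi^{j(r+\delta-1)}$ shows that the vanishing of $f$ on each $L_i$ is equivalent to $f_{s}(\alpha^{i})=0$ for all $s$; thus every $f_{s}$ lies in the cyclic code $D$ of length $r+\delta-1$ with defining set $\{1,\dots,\delta-1\}$, which is MDS with $d(D)=\delta$ (BCH together with Singleton). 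As $\mathrm{wt}(f)=\sum_{s}\mathrm{wt}(f_{s})$, $\delta+1<2\delta$, and each nonzero $f_{s}$ has weight $\ge\delta$, exactly one summand $f_{s_0}$ is nonzero, with $\mathrm{wt}(f_{s_0})=\delta+1$ and $f(x)=x^{s_0}f_{s_0}(x^{\rho})$.

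Now the two remaining roots of $g$ finish the job: $f(1)=0$ gives $f_{s_0}(1)=f_{s_0}(\alpha^{0})=0$, and $f(\beta)=\xi^{e s_0}f_{s_0}(\alpha^{e})=\xi^{e s_0}f_{s_0}(\alpha^{\delta})=0$ (using $e\equiv\delta\pmod{r+\delta-1}$) gives $f_{s_0}(\alpha^{\delta})=0$. Hence the nonzero codeword $f_{s_0}$ of the length-$(r+\delta-1)$ code vanishes at the $\delta+1$ consecutive powers $\alpha^{0},\alpha^{1},\dots,\alpha^{\delta}$, so by Lemma 1 $\mathrm{wt}(f_{s_0})\ge\delta+2$, contradicting $\mathrm{wt}(f_{s_0})=\delta+1$. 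Therefore $d(C)\ge\delta+2$, and combined with the upper bound $d(C)=\delta+2$, so $C$ is optimal.

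The main obstacle is exactly this distance bound: the generalized BCH bound alone only delivers $\delta+1$ (as in Theorem 1), so one must extract an extra $+1$ from the single added root. The decisive idea is the folding along residues mod $\rho$, which converts low‑weight codewords of $C$ into codewords of the short MDS code $D$; and the hypothesis $\gcd(\rho,r+\delta-1)\mid\delta$ enters precisely here, as it is what allows (via CRT) placing the extra root $\beta=\xi^{e}$ so that in the folded picture $\alpha^{e}=\alpha^{\delta}$ extends the run of consecutive zeros of $D$ from $\delta$ to $\delta+1$. Pinning down the folding identification (that each $f_{s}$ genuinely lies in $D$) is the one slightly technical point, but it is a routine transform computation once set up.
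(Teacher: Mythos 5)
Your proof is correct and takes a genuinely different route from the paper's. The preliminary steps match: one additional root whose $\rho$-th power is $\alpha^{\delta}$ (your $\beta=\xi^{e}$ via CRT versus the paper's $\gamma=\alpha^{a}$ with $a\rho+b(r+\delta-1)=\delta$), the degree count $k=r\rho-2$, locality from Proposition 1, and the upper bound $d\le\delta+2$. The divergence is the lower bound. The paper writes a putative weight-$(\delta+1)$ codeword $c(x)=\sum_{i=0}^{\delta}c_{i}x^{k_{i}}$ (normalized to $k_{0}=0$), works inside the $\le\delta$-dimensional space $U$ of evaluation vectors on its support, splits into three cases by the number $t$ of exponents $k_{i}$ divisible by $\rho$, and in each case exhibits a nonsingular $(\delta+1)\times(\delta+1)$ matrix of $U$-vectors --- a determinant argument. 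You instead fold $f(x)=\sum_{s}x^{s}f_{s}(x^{\rho})$, show that vanishing on $\bigcup_{i}L_{i}$ is equivalent to each $f_{s}$ lying in the $[r+\delta-1,r,\delta]$ MDS code, so the weight budget $\delta+1<2\delta$ forces a unique nonzero $f_{s_{0}}$, and the two remaining zeros $\alpha^{0}$ (from $f(1)=0$) and $\alpha^{\delta}$ (from $f(\beta)=0$, using $\beta^{\rho}=\alpha^{\delta}$) push $f_{s_{0}}$'s BCH weight to $\ge\delta+2$. The folding view is cleaner: it collapses the paper's three cases into one (the weight count alone confines the support to a single residue class mod $\rho$, which is exactly the paper's case $t=\delta$), and it makes transparent why the hypothesis $\gcd(\rho,r+\delta-1)\mid\delta$ is precisely what lets the extra root of $g$ land at position $\delta$ in the folded code. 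The same idea, with weight budget $2\delta-1<2\delta$, would also simplify the case analyses in the paper's Theorems 3 and 4.
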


\begin{proof}
  Let $\xi \in \mathbb{F}_{q^{s}}$ be
a primitive $n$-th root of unity and $\alpha = \xi^{\rho}$ be a primitive $(r+\delta-1)$-th root of unity, where $\rho=\frac{n}{r+\delta-1}$.  Since $(r+\delta-1) \mid (q-1)$, $\alpha^{q-1}=(\xi^{\rho})^{q-1}=(\xi^{n})^{\frac{q-1}{r+\delta-1}}=1$, thus $\alpha \in \mathbb{F}_{q}$. Since $\gcd(\rho, r+\delta -1) \mid \delta$, there exist integers $a, b$, such that $a\rho+b(r+\delta -1)=\delta$. Let $\gamma = \alpha^{a} \in \mathbb{F}_{q}$. Then
\[\gamma^{\rho}=\alpha^{a\rho}=\alpha^{\delta-b(r+\delta -1)}=\alpha^{\delta}.\]

Let
\[g(x)=(x-1)(x-\gamma)(x^{\rho}-\alpha)(x^{\rho}-\alpha^{2})\cdots(x^{\rho}-\alpha^{\delta-1}).\]
Then $g(x)$ is a polynomial over $\mathbb{F}_{q}$ and $g(x) \mid (x^{n}-1)$ since all roots of $g(x)$ are $n$-th roots of unity and they are distinct. Let $C$ be the cyclic code with generator polynomial $g(x)$. Then the dimension of $C$ is $k=n-\deg(g(x))=n-(\rho(\delta-1)+2)=r\rho-2$. Note that the set of the roots of $g(x)$ contains $\bigcup_{\ell=1}^{\delta-1}L_{\ell},$
where
 \[L_{\ell}=\{\xi^{i} | i\textnormal{ mod }(r+\delta -1) =\ell \}.\]
By Proposition 1, $C$ has $(r, \delta)$-locality.  Note that $\lceil\frac{k}{r}\rceil=\lceil\frac{r\rho-2}{r}\rceil=\rho$ since $r \geq 3$. By the bound (2),
\[d \leq n-(r\rho-2)-(\rho-1)(\delta-1)+1=\delta+2.\]
Thus to prove $C$ is optimal, we only need to show that $d \geq \delta+2$. By contradiction, we suppose $d \leq \delta+1$. Then there exists a nonzero polynomial $c(x)=\sum_{i=0}^{\delta}c_{i}x^{k_{i}}$ with $0=k_{0} < k_{1}< \cdots < k_{\delta} <n$, such that $g(x) \mid c(x)$. Since $\xi$ is a primitive $n$-th root of unity,
\[\xi^{i} \neq \xi^{j},\textnormal{ for any }0 \leq i \neq j \leq \delta .\]
Let
\[\bm{1}=(1,1,\ldots,1),\]
\[\bm{\gamma}=\left(1, \gamma^{k_{1}},\ldots, \gamma^{k_{\delta}}\right), \]
and
\[\bm{u}_{i,j}=\left(1,\xi^{(i+j(r+\delta-1))k_{1}},\ldots ,\xi^{(i+j(r+\delta-1))k_{\delta}}\right), i=1, 2, \ldots, \delta-1, j=0, 1, \ldots, \rho-1,\]
be the vectors in $\mathbb{F}^{\delta+1}_{q^{s}}$. Since $1, \gamma$ and $\xi^{i+j(r+\delta-1)}$ are roots of $c(x)$,
\begin{equation}\label{3}
  <\bm{1}, \bm{c}>=<\bm{\gamma}, \bm{c}>=<\bm{u}_{i,j}, \bm{c}>=0,
\end{equation}
for all $i=1, 2, \ldots, \delta-1$ and $j=0, 1, \ldots, \rho-1$, where $\bm{c}=(c_{0}, c_{1}, \ldots, c_{\delta})$ and $<,>$ is the canonical Euclidean inner product of $\mathbb{F}^{\delta+1}_{q^{s}}$.
Let $U$ be the vector space spanned by  the vectors $\bm{1}, \bm{\gamma}, \bm{u}_{i,j}$ ($i=1, 2, \ldots, \delta-1$ and $j=0, 1, \ldots, \rho-1$) over $\mathbb{F}_{q^{s}}$. Since $\bm{c} \neq \bm{0}$ and Eq. (3), we have $\dim(U) \leq \delta$, i.e., any $\delta + 1$ vectors of $U$ are linearly dependent. In other words, we have
\begin{center}
\textbf{Fact 1}: Let $M$ be a $(\delta+1)\times(\delta+1)$ matrix whose row vectors belong to $U$, then $\det(M)=0$.
\end{center}

Now, assume that among these $\delta$ integers $k_{i}$, $t$ integers are divisible by $\rho$ and the rest $\delta-t$ integers are not divisible by $\rho$. Then $0 \leq t \leq \delta$. Without loss of generality, we assume that $\rho \mid k_{1}, k_{2}, \ldots, k_{t}$. Then for any $1 \leq i \leq \delta-1$,  we have
\begin{equation}\label{4}
 \sum_{j=0}^{\rho-1}\xi^{(i+j(r+\delta-1))k_{\ell}}=\left\{
\begin{aligned}
   \rho\xi^{ik_{\ell}}, & \textnormal{ if }1 \leq \ell \leq t, \\
 0, & \textnormal{ if } t+1 \leq j \leq \delta.
\end{aligned}\right.
\end{equation}
Set
\[A=\left(
      \begin{array}{c}
        \bm{1} \\
        \bm{u}_{1,0}\\
        \bm{u}_{2,0} \\
        \vdots \\
        \bm{u}_{\delta-1,0} \\
      \end{array}
    \right).\]

\textbf{Case (i)}: $ 0 \leq t \leq \delta-2$. By Eq. (4), we have
\[\bm{u}_{i} \triangleq \bm{u}_{i,0}+\bm{u}_{i,1}+ \cdots +\bm{u}_{i,\rho-1}=\rho(1, \xi^{ik_{1}},\ldots,\xi^{ik_{t}},0,\ldots,0), \]
for $i=1, 2, \ldots, \delta-1$. Each $\bm{u}_{i} \in U$.
Let
\[B=\frac{1}{\rho}\left(
      \begin{array}{c}
        \bm{u}_{1}\\
        \bm{u}_{2} \\
         \vdots \\
        \bm{u}_{t+1}\\
      \end{array}
    \right)=\left(
              \begin{array}{ccccccc}
                1 & \xi^{k_{1}} & \ldots & \xi^{k_{t}} & 0 & \ldots & 0 \\
                1 & \xi^{2k_{1}} & \ldots & \xi^{2k_{t}} & 0 & \ldots & 0 \\
                \vdots & \vdots & \ddots & \vdots & \vdots & \ddots & \vdots \\
                1 & \xi^{(t+1)k_{1}} & \ldots & \xi^{(t+1)k_{t}} & 0 & \ldots & 0 \\
              \end{array}
            \right).\]
    Then the first $t+1$ columns of $B$ form a Vandermonde matrix which is invertible since $\xi^{k_{i}}  \neq \xi^{k_{j}}$ for any $1 \leq i \neq j \leq \delta$. We deduce that $\bm{e}=(1,0,\ldots,0) \in U$. Let
    \[ M=\left(
           \begin{array}{c}
             A \\
             \bm{e} \\
           \end{array}
         \right).\]
         Then
         \begin{eqnarray*}
          \det(M) &=& \det\left(
                         \begin{array}{cccc}
                           1 & 1 & \cdots & 1 \\
                           1 & \xi^{k_{1}} & \cdots & \xi^{k_{\delta}} \\
                           \vdots & \vdots & \ddots & \vdots \\
                           1 & \xi^{(\delta-1)k_{1}} & \cdots & \xi^{(\delta-1)k_{\delta}} \\
                           1 & 0 & \cdots & 0 \\
                         \end{array}
                       \right) \\
               &=&\det\left(
                         \begin{array}{cccc}
                          1 &   1 & \cdots & 1 \\
                        \xi^{k_{1}} &    \xi^{k_{2}} & \cdots & \xi^{k_{\delta}} \\
                         \vdots &\vdots & \ddots & \vdots \\
                         \xi^{(\delta-1)k_{1}} &   \xi^{(\delta-1)k_{2}} & \cdots & \xi^{(\delta-1)k_{\delta}} \\
                         \end{array}
                       \right) \\
            &=& \prod_{1\leq i <j \leq \delta}(\xi^{k_{j}}-\xi^{k_{i}}) \neq 0,
         \end{eqnarray*}
which contradicts to the \textbf{Fact 1}.

 \textbf{Case (ii)}: $t=\delta-1$. At this time, by Eq. (4), we have
 \[\bm{u}_{\delta-1}= \rho(1, \xi^{(\delta-1)k_{1}},\ldots,\xi^{(\delta-1)k_{\delta-1}},0),\]
and
\[\bm{v}_{\delta-1}\triangleq \bm{u}_{\delta-1,0}-\frac{1}{\rho}\bm{u}_{\delta-1}=(0, 0, \ldots, 0, \xi^{(\delta-1)k_{\delta}}).\]
 Let  \[ M=\left(
           \begin{array}{c}
             A \\
             \bm{v}_{\delta-1} \\
           \end{array}
         \right) .\]
          Then
          \begin{eqnarray*}
            \det(M) &=& \det\left(
                         \begin{array}{ccccc}
                           1 & 1 & \cdots & 1 & 1 \\
                           1 & \xi^{k_{1}} & \cdots & \xi^{k_{\delta-1}}& \xi^{k_{\delta}} \\
                           \vdots & \vdots & \ddots & \vdots & \vdots \\
                           1 & \xi^{(\delta-1)k_{1}} & \cdots & \xi^{(\delta-1)k_{\delta-1}}& \xi^{(\delta-1)k_{\delta}} \\
                           0 & 0 & \cdots &0& \xi^{(\delta-1)k_{\delta}} \\
                         \end{array}
                       \right) \\
             &=& \xi^{(\delta-1)k_{\delta}}\det\left(
                         \begin{array}{cccc}
                          1 &   1 & \cdots & 1 \\
                        1 &    \xi^{k_{1}} & \cdots & \xi^{k_{\delta-1}} \\
                         \vdots &\vdots & \ddots & \vdots \\
                         1 &   \xi^{(\delta-1)k_{1}} & \cdots & \xi^{(\delta-1)k_{\delta-1}} \\
                         \end{array}
                       \right)\\
   &=& \xi^{(\delta-1)k_{\delta}} \prod_{0\leq i < j \leq \delta-1}(\xi^{k_{j}}-\xi^{k_{i}}) \neq 0,
          \end{eqnarray*}
 which also contradicts to the \textbf{Fact 1}.

 \textbf{Case (iii)}: $t=\delta$. Recall that $a\rho+b(r+\delta-1)=\delta$ , $\alpha=\xi^{\rho}$ and $\gamma=\alpha^{a}$. Since $\rho \mid k_{j}$, we have
\[\gamma^{k_{j}}=(\alpha^{a\rho})^{\frac{k_{j}}{\rho}}=(\alpha^{\delta-b(r+\delta-1)})^{\frac{k_{j}}{\rho}}
=\alpha^{\frac{k_{j}}{\rho}\delta}=(\xi^{\rho})^{\frac{k_{j}}{\rho}\delta}=\xi^{\delta k_{j}},\]
for $j=1, 2, \ldots, \delta.$
Let
\[M=\left(
      \begin{array}{c}
        A \\
        \bm{\gamma} \\
      \end{array}
    \right).
 \]
Then
\begin{eqnarray*}
  \det(M) &=& \det\left(
      \begin{array}{ccccc}
        1 & 1 & 1 & \cdots & 1 \\
        1 & \xi^{k_{1}} & \xi^{k_{2}}  & \cdots & \xi^{k_{\delta}} \\
        1 & \xi^{2k_{1}} & \xi^{2k_{2}}  & \cdots & \xi^{2k_{\delta}} \\
        \vdots & \vdots & \vdots & \ddots & \vdots \\
        1 & \xi^{\delta k_{1}} & \xi^{\delta k_{2}}  & \cdots & \xi^{\delta k_{\delta}} \\
      \end{array}
    \right) \\
   &=& \prod_{0\leq i < j \leq \delta}(\xi^{k_{j}}-\xi^{k_{i}}) \neq 0,
\end{eqnarray*}
  which still contradicts to the \textbf{Fact 1}.

In each case, it always leads to a contradiction. Thus $d \geq \delta+2$. The proof is completed.
\end{proof}

\begin{remark}
\cite[Theorem 1 (2)]{3} can be seen as the $\delta=2$ case of Theorem 2.
\end{remark}

\begin{example}
Let $r=4$, $\delta=6$ and $q=19$,  then by Theorem 2, for any $n=27n'$ with $\gcd(n', 57)=1$, there exists a $19$-ary optimal cyclic $(4, 6)$-LRC of length $n$ and minimum distance $8$.
\end{example}

If we further assume that ``$r \geq \delta+1$ and $\gcd(\frac{n}{r+\delta -1}, r+\delta -1)=1$'' in Theorem 2, then we can obtain an optimal cyclic $(r, \delta)$-LRC with larger minimum distance as follows.
\begin{theorem}
Let $q$ be a prime power and $n$ be a positive integer with $\gcd(n, q)=1$. Let $3 \leq \delta+1 \leq r$ such that $(r+\delta-1) \mid \gcd(n, q-1)$ and $\gcd(\frac{n}{r+\delta-1}, r+\delta-1)=1$. Then there exists a $q$-ary optimal cyclic $(r, \delta)$-LRC of length $n$ and minimum distance $2\delta$.
\end{theorem}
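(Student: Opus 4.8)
The plan is to follow the template of Theorems 1 and 2: write down an explicit generator polynomial whose zero set forces $(r,\delta)$-locality via Proposition 1 and is rich enough to push the distance up to $2\delta$, and then establish the distance lower bound by a rank argument on evaluation vectors, this time organized by residue classes modulo $\rho$. For the code, put $P=r+\delta-1$ and $\rho=n/P$. As in Theorem 2, $\alpha=\xi^{\rho}$ is a primitive $P$-th root of unity in $\mathbb{F}_q$ (since $P\mid q-1$), and because $\gcd(\rho,P)=1$ there is, for every integer $m$, an integer $a_m$ with $a_m\rho\equiv m\pmod P$; set $\gamma_m=\alpha^{a_m}\in\mathbb{F}_q$, so that $\gamma_m^{\rho}=\alpha^{m}$ (in particular $\gamma_0=1$). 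I would take
\[
g(x)=\Bigl(\prod_{\ell=1}^{\delta-1}(x^{\rho}-\alpha^{\ell})\Bigr)\Bigl(\prod_{m\in\{0,\delta,\delta+1,\dots,2\delta-2\}}(x-\gamma_m)\Bigr).
\]
The hypothesis $r\ge\delta+1$, i.e.\ $P\ge 2\delta$, is exactly what makes $0,1,\dots,2\delta-2$ pairwise incongruent modulo $P$, so the roots $\gamma_m$ with $m\in\{0,\delta,\dots,2\delta-2\}$ are pairwise distinct and distinct from every zero of the $x^{\rho}-\alpha^{\ell}$; hence $g(x)\in\mathbb{F}_q[x]$ has $\rho(\delta-1)+\delta$ distinct $n$-th roots of unity among its zeros and divides $x^{n}-1$. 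If $C$ is the cyclic code generated by $g$, then $k=n-\rho(\delta-1)-\delta=\rho r-\delta$ and $\lceil k/r\rceil=\rho$ (as $\delta<r$); the zero set of $g$ contains $\bigcup_{\ell=1}^{\delta-1}L_\ell$, so $C$ has $(r,\delta)$-locality by Proposition 1; and bound (2) then reads $d\le n-(\rho r-\delta)+1-(\rho-1)(\delta-1)=2\delta$. So it suffices to show $d\ge 2\delta$.

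Suppose not and fix a nonzero codeword $c(x)=\sum_{i=0}^{2\delta-2}c_ix^{k_i}$ with $0=k_0<k_1<\dots<k_{2\delta-2}<n$. For each zero $\zeta$ of $g$ the vector $\bm e(\zeta)=(\zeta^{k_0},\dots,\zeta^{k_{2\delta-2}})$ is orthogonal to $\bm c\ne\bm 0$, so the $\mathbb{F}_{q^{s}}$-span $U$ of all the $\bm e(\zeta)$ satisfies $\dim U\le 2\delta-2$; I will contradict this by showing $U^{\perp}=0$. Partition $\{0,1,\dots,2\delta-2\}$ by the residue of $k_i$ modulo $\rho$; in the class of residue $c$ write $k_i=c+\rho s_i$ and set $y_i=\alpha^{s_i}$, which are pairwise distinct within that class since $0\le s_i<P=\operatorname{ord}(\alpha)$. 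A short computation gives, restricted to that class, $\bm e(\xi^{\ell+jP})=\xi^{\ell c}\omega_c^{\,j}(y_i^{\ell})_i$ with $\omega_c:=\xi^{Pc}$, and $\bm e(\gamma_m)=\gamma_m^{\,c}(y_i^{m})_i$. Now take $\lambda=(\lambda_i)\in U^{\perp}$. Since there are at most $\rho$ residue classes and the $\omega_c$ are pairwise distinct, a Vandermonde argument in the $\rho$ shifts $j=0,\dots,\rho-1$ decouples the classes in the relations coming from $L_1,\dots,L_{\delta-1}$ and yields $\sum_{i\in I_c}\lambda_iy_i^{\ell}=0$ for every class $I_c$ and every $\ell=1,\dots,\delta-1$; since the $y_i$ are distinct, $\lambda$ vanishes on every class of size $\le\delta-1$. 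Because the $2\delta-1$ indices are distributed among the classes, at most one class $I_{c^{*}}$ can have size $\ge\delta$; on it the $\gamma_m$-relations additionally give $\sum_{i\in I_{c^{*}}}\lambda_iy_i^{m}=0$ for $m\in\{0,\delta,\dots,2\delta-2\}$, hence $\sum_{i\in I_{c^{*}}}\lambda_iy_i^{p}=0$ for all $p=0,1,\dots,2\delta-2$, a linear system in at most $2\delta-1$ unknowns whose coefficient matrix contains an invertible Vandermonde block; thus $\lambda=0$. This contradicts $\bm c\in U^{\perp}\setminus\{0\}$, so $d\ge 2\delta$ and $C$ is optimal.

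The step I expect to be the real obstacle is this last argument. Unlike in Theorem 1, the zeros of $g$ are nowhere near consecutive powers of $\xi$, so no BCH-type bound is available, and one must carefully track how the $\rho$ evaluation vectors contributed by each repeated factor $x^{\rho}-\alpha^{\ell}$ collapse---inside each residue class---onto a single Vandermonde row, how the Vandermonde decoupling in the $\omega_c$ meshes with the pigeonhole leaving at most one ``large'' residue class, and how the single-root factors $x-\gamma_m$ then supply the remaining powers on that class. Everything else (the divisibility checks making $g$ a valid $q$-ary divisor of $x^{n}-1$ with distinct roots, the dimension count, and the application of Proposition 1) is routine parameter arithmetic of the kind already carried out in Theorems 1 and 2, and specializing to $\delta=2$ should recover the construction of Theorem 2 in that case.
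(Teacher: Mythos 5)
Your construction of $g(x)$ coincides with the paper's (your $\gamma_m$ equals the paper's $\gamma^m$, and your $m=0$ factor is the paper's $x-1$), and your distance argument is correct, but you have organized it in a genuinely different and tighter way. The paper runs the proof ``primally'': it constructs, in several cases keyed to the parameter $t$ (and a further parameter $m$ inside Case ii), an explicit $(2\delta-1)\times(2\delta-1)$ matrix $M$ with rows in $U$ and verifies $\det M\ne 0$ by hand, handling separately the case where the large residue class is $I_0$ (Case i), the case where it is some other class (Case ii.1, using the shifted sums $\bm u'_i$ and $\bm v'_i$), and the case of no large class (Case ii.2). You instead pass to the dual: show $U^{\perp}=0$. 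The two stages of your argument --- first decouple the $L_1,\dots,L_{\delta-1}$ relations across residue classes $I_c$ by a Vandermonde in the $\rho$ shifts $\omega_c^j$, then observe that at most one class has size $\geq\delta$ and kill it with the $\gamma_m$ relations together with the already-obtained per-class relations --- capture exactly the content of the paper's cases, but uniformly: you do not need to single out whether the large class is $I_0$ or not, and the ``no large class'' case is absorbed as the degenerate instance where the pigeonhole argument applies vacuously. What the paper's approach buys is very explicit determinants and no abstract linear-algebra overhead; what yours buys is a single, streamlined argument with no case split beyond ``small class versus the (at most one) large class,'' which also makes it transparent why exactly $\delta$ single-root factors $x-\gamma_m$ suffice (they supply the missing exponents $\{0,\delta,\dots,2\delta-2\}$ on the one possibly-large class, completing the range $\{0,\dots,2\delta-2\}$). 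One small point worth making explicit when you write this up: the $y_i=\alpha^{s_i}$ within a class are distinct because $0\le s_i<P=\mathrm{ord}(\alpha)$ and the $k_i$ in that class differ, and the $\omega_c=\xi^{Pc}$ are distinct $\rho$-th roots of unity because $\xi^{P}$ has order $\rho$; both checks use $n=\rho P$ and are exactly where $\gcd(\rho,P)=1$ and $P\mid q-1$ are not needed but the structure of $n$ is.
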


\begin{proof}
  Let $\xi \in \mathbb{F}_{q^{s}}$ be
a primitive $n$-th root of unity and $\alpha = \xi^{\rho}$ be a prmitive $(r+\delta-1)$-th root of unity, where $\rho=\frac{n}{r+\delta-1}$.  Since $(r+\delta-1) \mid (q-1)$, $\alpha^{q-1}=(\xi^{\rho})^{q-1}=(\xi^{n})^{\frac{q-1}{r+\delta-1}}=1$, thus $\alpha \in \mathbb{F}_{q}$. Since $\gcd(\rho, r+\delta-1)=1$, there exist integers $a, b$, such that $a\rho+b(r+\delta-1)=1$. Let $\gamma = \alpha^{a} \in \mathbb{F}_{q}$. Then
\[\gamma^{\rho}=\alpha^{a\rho}=\alpha^{1-b(r+\delta-1)}=\alpha.\]

Let
\[g(x)=(x-1)\prod_{i=1}^{\delta-1}(x^{\rho}-\alpha^{i})\prod_{j=\delta}^{2\delta-2}(x-\gamma^{j}).\]
Note that $(\gamma^{j})^{\rho} = \alpha^{j} \neq \alpha^{i}$ for any $1 \leq i \leq \delta-1$ and $\delta \leq j \leq 2\delta-2$. Then $g(x)$ is a polynomial over $\mathbb{F}_{q}$ and $g(x) \mid (x^{n}-1)$ since all roots of $g(x)$ are $n$-th roots of unity and they are distinct. Let $C$ be the cyclic code with generator polynomial $g(x)$. Then the dimension of $C$ is $k=n-\deg(g(x))=n-((\delta-1)\rho+\delta)=r\rho-\delta$. Note that the set of the roots of $g(x)$ contains $\bigcup_{\ell=1}^{\delta-1}L_{\ell},$
where
 \[L_{\ell}=\{\xi^{i} \mid i\textnormal{ mod }(r+\delta-1) =\ell \}.\]
By Proposition 1, $C$ has $(r, \delta)$-locality.  Note that
$\lceil\frac{k}{r}\rceil=\lceil\frac{r\rho-\delta}{r}\rceil=\rho$ since $r \geq \delta+1$. By the bound
(2),
\[d \leq n-(r\rho-\delta)-(\rho-1)(\delta-1)+1=2\delta.\]
Thus to prove $C$ is optimal, we only need to prove that $d \geq 2\delta$. By contradiction, we suppose $d \leq 2\delta-1$. Then there exists a nonzero polynomial $c(x)=\sum_{i=0}^{2\delta-2}c_{i}x^{k_{i}}$ with $0=k_{0} < k_{1}< \cdots < k_{2\delta-2} <n$, such that $g(x) \mid c(x)$. Since $\xi$ is a primitive $n$-th root of unity, we have
\[\xi^{i} \neq \xi^{j},\textnormal{ for any }0 \leq i \neq j \leq 2\delta-2 .\]
Set
\[\bm{1}=(1,1,\ldots,1),\]
\[\bm{\gamma}_{\ell}=\left(1, \gamma^{\ell k_{1}},\ldots, \gamma^{\ell k_{2\delta-2}}\right), \ell=\delta, \delta+1, \ldots, 2\delta-2,\]
and
\[\bm{u}_{i,j}=\left(1,\xi^{(i+j(r+\delta-1))k_{1}},\ldots ,\xi^{(i+j(r+\delta-1))k_{2\delta-2}}\right), i=1, 2, \ldots, \delta-1, j=0, 1, \ldots, \rho-1,\]
be the vectors in $\mathbb{F}^{2\delta-1}_{q^{s}}$.
Let $U$ be the vector space spanned by the vectors $\bm{1}, \bm{\gamma}_{\ell}$ $(\ell=\delta, \delta+1, \ldots, 2\delta-2)$ and $\bm{u}_{i,j}$ ($i=1, 2, \ldots, \delta-1, j=0, 1, \ldots, \rho-1$) over $\mathbb{F}_{q^{s}}$. Similar to the proof of Theorem 2, we have
\begin{center}
\textbf{Fact 2}: Let $M$ be a $(2\delta-1)\times (2\delta-1)$ matrix whose row vectors belong to $U$, then $\det(M)=0$.
\end{center}

Now, assume that among $2\delta-2$ integers $k_{i}$, $t$ integers are divisible by $\rho$ and the rest $2\delta-2-t$ integers are not divisible by $\rho$. Then $0 \leq t \leq 2\delta-2$. Without loss of generality, we suppose that $\rho \mid k_{1}, \ldots, k_{t}$ and $\rho \nmid k_{t+1}, \ldots, k_{2\delta-2}$. Then for any $1 \leq i \leq \delta-1$,  we have
\begin{equation}\label{5}
 \sum_{j=0}^{\rho-1}\xi^{(i+j(r+\delta-1))k_{\ell}}=\left\{
\begin{aligned}
   \rho\xi^{ik_{\ell}}, & \textnormal{ if }1 \leq \ell \leq t, \\
 0, & \textnormal{ if } t+1 \leq j \leq 2\delta-2.
\end{aligned}\right.
\end{equation}
Let
\[A=\left(
      \begin{array}{c}
        \bm{1} \\
        \bm{u}_{1,0} \\
        \vdots \\
        \bm{u}_{\delta-1,0}  \\
      \end{array}
    \right)
\]
be a $\delta \times (2\delta-1)$ matrix over $\mathbb{F}_{q^{s}}$.

\textbf{Case (i)}:  $\delta-1 \leq t \leq 2\delta-2$. For $i=1, 2, \ldots, \delta-1$, by Eq. (5), we have
\[\bm{u}_{i} \triangleq \bm{u}_{i,0}+\bm{u}_{i,1}+ \cdots +\bm{u}_{i,\rho-1}=\rho(1, \xi^{ik_{1}},\ldots, \xi^{ik_{t}}, 0, \ldots, 0), \]
and
\[\bm{v}_{i} \triangleq \bm{u}_{i,0}-\frac{1}{\rho}\bm{u}_{i}=(0, \ldots, 0, \xi^{ik_{t+1}},\ldots, \xi^{ik_{2\delta-2}}). \]
Thus $\bm{v}_{i} \in U.$ Recall that $a\rho+b(r+\delta-1)=1$, $\alpha=\xi^{\rho}$ and $\gamma=\alpha^{a}$, we have
\[\gamma^{k_{j}}=(\alpha^{a\rho})^{\frac{k_{j}}{\rho}}=(\alpha^{1-b(r+\delta-1)})^{\frac{k_{j}}{\rho}}
=\alpha^{\frac{k_{j}}{\rho}}=(\xi^{\rho})^{\frac{k_{j}}{\rho}}=\xi^{k_{j}},\]
for $j=1, 2,\ldots, t$.
Let
    \[ M=\left(
           \begin{array}{c}
             A \\
             \bm{v}_{1} \\
             \vdots \\
             \bm{v}_{2\delta-2-t}  \\
            \bm{\gamma}_{\delta} \\
            \vdots \\
            \bm{\gamma}_{t} \\
           \end{array}
         \right).\]
Then $M$ is a $(2\delta-1)\times (2\delta-1)$ matrix whose row vectors belong to $U$, and
          \begin{eqnarray*}
            \det(M) &=& \det\left(
                         \begin{array}{ccccccc}
                           1 & 1 & \cdots & 1 & 1 & \cdots & 1\\
                           1 & \xi^{k_{1}} & \cdots &\xi^{k_{t}} & \xi^{k_{t+1}} & \cdots & \xi^{k_{2\delta-2}}\\
                           \vdots & \vdots &\ddots &\vdots & \vdots  & \ddots & \vdots  \\
                           1 & \xi^{(\delta-1)k_{1}} & \cdots &\xi^{(\delta-1)k_{t}} & \xi^{(\delta-1)k_{t+1}} & \cdots & \xi^{(\delta-1)k_{2\delta-2}}\\
                           0 & 0 & \cdots &0 & \xi^{k_{t+1}} & \cdots & \xi^{k_{2\delta-2}}\\
                           \vdots & \vdots &\ddots &\vdots & \vdots  & \ddots & \vdots  \\
                           0 & 0 & \cdots &0 & \xi^{(2\delta-2-t)k_{t+1}} & \cdots & \xi^{(2\delta-2-t)k_{2\delta-2}}\\
                    1 & \xi^{\delta k_{1}} & \cdots &\xi^{\delta k_{t}} & \gamma^{\delta k_{t+1}} & \cdots & \gamma^{\delta k_{2\delta-2}}\\
                           \vdots & \vdots &\ddots &\vdots & \vdots  & \ddots & \vdots  \\
                   1 & \xi^{t k_{1}} & \cdots &\xi^{t k_{t}} & \gamma^{t k_{t+1}} & \cdots & \gamma^{t k_{2\delta-2}}\\
                         \end{array}
                       \right) \\
             &=& ab,
          \end{eqnarray*}
where
\begin{eqnarray*}
  a &=& \det\left(
                                     \begin{array}{cccc}
                                       1 & 1 & \cdots &1 \\
                                       1 & \xi^{k_{1}} & \cdots &\xi^{k_{t}} \\
                                       \vdots & \vdots &\ddots &\vdots \\
                                       1 & \xi^{t k_{1}} & \cdots &\xi^{t k_{t}} \\
                                     \end{array}
                                   \right) \\
    &=& \prod_{0 \leq i < j \leq t}(\xi^{k_{j}}-\xi^{k_{i}})\neq 0,
\end{eqnarray*}
and
\begin{eqnarray*}
  b &=& \det\left(
                                     \begin{array}{cccc}
                                      \xi^{k_{t+1}} & \xi^{k_{t+2}} & \cdots & \xi^{k_{2\delta-2}} \\
                                       \xi^{2k_{t+1}} & \xi^{2k_{t+2}} & \cdots & \xi^{2k_{2\delta-2}} \\
                                       \vdots & \vdots &\ddots &\vdots \\
                                       \xi^{(2\delta-2-t)k_{t+1}} & \xi^{(2\delta-2-t)k_{t+2}} & \cdots & \xi^{(2\delta-2-t)k_{2\delta-2}} \\
                                     \end{array}
                                   \right) \\
    &=& \xi^{k_{t+1}+\cdots+k_{2\delta-2}}\prod_{t+1 \leq i < j \leq 2\delta-2}(\xi^{k_{j}}-\xi^{k_{i}}) \neq 0.
\end{eqnarray*}
Thus $\det(M) \neq 0$ which  contradicts to the \textbf{Fact 2}.

 \textbf{Case (ii)}: $0 \leq t \leq \delta-2$. By Eq. (5), for $i=1, 2,\ldots, \delta-1$, we have
\[\bm{u}_{i} =\bm{u}_{i,0}+\bm{u}_{i,1}+ \cdots +\bm{u}_{i,\rho-1}=\rho(1, \xi^{ik_{1}},\ldots, \xi^{ik_{t}}, 0, \ldots, 0), \]
Since $t \leq \delta-2$, $t+1 \leq \delta-1$. We may let
\[B=\frac{1}{\rho}\left(
      \begin{array}{c}
        \bm{u}_{1}\\
        \bm{u}_{2} \\
         \vdots \\
        \bm{u}_{t+1}\\
      \end{array}
    \right)=\left(
              \begin{array}{ccccccc}
                1 & \xi^{k_{1}} & \ldots & \xi^{k_{t}} & 0 & \ldots & 0 \\
                1 & \xi^{2k_{1}} & \ldots & \xi^{2k_{t}} & 0 & \ldots & 0 \\
                \vdots & \vdots & \ddots & \vdots & \vdots & \ddots & \vdots \\
                1 & \xi^{(t+1)k_{1}} & \ldots & \xi^{(t+1)k_{t}} & 0 & \ldots & 0 \\
              \end{array}
            \right)
    .\]
The first $t+1$ columns of $B$ form a Vandermonde matrix which is invertible. Let
\[\bm{e}_{i}=(\underbrace{0,\ldots, 0}_{(i-1) \textnormal{ times}}, 1, 0, \ldots, 0).\]
Then we deduce that for any $1 \leq i \leq t+1$,
\[\bm{e}_{i}  \in U.\]
 Let
\[m = \max_{t+1 \leq i \leq 2\delta-2}|S_{i}|, \]
where
\[S_{i} \triangleq \{t+1 \leq j \leq 2\delta-2 : j \neq i\textnormal{ and }k_{j}-k_{i} \equiv 0 \textnormal{ (mod }\rho) \}.\]
Then $0 \leq m \leq 2\delta-t-3$.
\begin{description}
  \item[1)] if $\delta-1 \leq m \leq 2\delta-t-3$ :  Without loss of generality, we suppose that $\rho \mid (k_{2\delta-2-m}-k_{2\delta-2}), (k_{2\delta-1-m}-k_{2\delta-2}),\ldots, (k_{2\delta-3}-k_{2\delta-2})$. Set $h=2\delta-2-m$, then $h \leq \delta-1$.
For any $1 \leq i \leq \delta-1$,  we have
\begin{equation}\label{6}
 \sum_{j=0}^{\rho-1}\xi^{(i+j(r+\delta-1))(k_{\ell}-k_{2\delta-2})}=\left\{
\begin{aligned}
   \rho\xi^{i(k_{\ell}-k_{2\delta-2})}, & \textnormal{ if }h \leq \ell \leq 2\delta-2, \\
 0, & \textnormal{ if } 1 \leq \ell \leq h-1.
\end{aligned}\right.
\end{equation}
Thus
\begin{eqnarray*}
  \bm{u}'_{i} &\triangleq& \xi^{-ik_{2\delta-2}}\bm{u}_{i,0}+\xi^{-(i+r+\delta-1)k_{2\delta-2}}\bm{u}_{i,1}+ \cdots + \xi^{-(i+(\rho-1)(r+\delta-1))k_{2\delta-2}}\bm{u}_{i,\rho-1} \\
   &=& (0, \ldots, 0, \rho\xi^{i(k_{h}-k_{2\delta-2})}, \ldots, \rho \xi^{i(k_{2\delta-3}-k_{2\delta-2})}, \rho),
\end{eqnarray*}
and
\[\bm{v}'_{i} \triangleq \bm{u}_{i,0}-\frac{1}{\rho}\xi^{i k_{2\delta-2}}\bm{u}'_{i}=(1, \xi^{i k_{1}}, \xi^{i k_{2}},\ldots, \xi^{i k_{h-1}}, 0\ldots, 0).\]
Thus $\bm{v}'_{i} \in U.$ On the other hand, for $h \leq \ell \leq 2\delta-2$, since $\rho \mid (k_{\ell}-k_{2\delta-2})$, we have
\begin{eqnarray*}
  \gamma^{k_{\ell}-k_{2\delta-2}} &=& (\alpha^{a\rho})^{\frac{k_{j}-k_{2\delta-2}}{\rho}}=(\alpha^{1-b(r+\delta-1)})^{\frac{k_{\ell}-k_{2\delta-2}}{\rho}} \\
   &=& \alpha^{\frac{k_{\ell}-k_{2\delta-2}}{\rho}}=(\xi^{\rho})^{\frac{k_{\ell}-k_{2\delta-2}}{\rho}}=\xi^{k_{\ell}-k_{2\delta-2}},
\end{eqnarray*}
i.e.,
\[\gamma^{k_{\ell}}=\epsilon \xi^{k_{\ell}},\]
where $\epsilon=(\frac{\gamma}{\xi})^{k_{2\delta-2}}$.
Let
    \[ M=\left(
           \begin{array}{c}
             A \\
             \bm{v}'_{1} \\
             \vdots \\
             \bm{v}'_{h}  \\
            \frac{1}{\epsilon}\bm{\gamma}_{\delta} \\
            \vdots \\
            \frac{1}{\epsilon}\bm{\gamma}_{m} \\
           \end{array}
         \right).\]
Then $M$ is a $(2\delta-1)\times (2\delta-1)$ matrix whose row vectors belong to $U$, and
          \begin{eqnarray*}
            \det(M) &=& \det\left(
                         \begin{array}{ccccccc}
                           1 & 1 & \cdots & 1 & 1 & \cdots & 1\\
                           1 & \xi^{k_{1}} & \cdots &\xi^{k_{h-1}} & \xi^{k_{h}} & \cdots & \xi^{k_{2\delta-2}}\\
                           \vdots & \vdots &\ddots &\vdots & \vdots  & \ddots & \vdots  \\
                           1 & \xi^{(\delta-1)k_{1}} & \cdots &\xi^{(\delta-1)k_{h-1}} & \xi^{(\delta-1)k_{h}} & \cdots & \xi^{(\delta-1)k_{2\delta-2}}\\
                          1 & \xi^{k_{1}} & \cdots &\xi^{k_{h-1}} & 0 & \cdots & 0\\
                           \vdots & \vdots &\ddots &\vdots & \vdots  & \ddots & \vdots  \\
                          1 & \xi^{h k_{1}} & \cdots &\xi^{h k_{h-1}} & 0 & \cdots & 0\\
                    1 & \frac{1}{\epsilon}\gamma^{\delta k_{1}} & \cdots &\frac{1}{\epsilon}\gamma^{\delta k_{h-1}} & \xi^{\delta k_{h}} & \cdots & \xi^{\delta k_{2\delta-2}}\\
                           \vdots & \vdots &\ddots &\vdots & \vdots  & \ddots & \vdots  \\
                   1 & \frac{1}{\epsilon}\gamma^{m k_{1}} & \cdots &\frac{1}{\epsilon}\gamma^{m k_{h-1}} & \xi^{m k_{h}} & \cdots & \xi^{m k_{2\delta-2}}\\
                         \end{array}
                       \right) \\
             &=& ab,
          \end{eqnarray*}
where
\begin{eqnarray*}
  a &=& \det\left(
                                     \begin{array}{cccc}
                                       1 & \xi^{k_{1}} & \cdots &\xi^{k_{h-1}} \\
                                       1 & \xi^{2k_{1}} & \cdots &\xi^{2k_{h-1}} \\
                                       \vdots & \vdots &\ddots &\vdots \\
                                       1 & \xi^{h k_{1}} & \cdots &\xi^{h k_{h-1}} \\
                                     \end{array}
                                   \right) \\
    &=& \prod_{1 \leq i < j \leq h-1}(\xi^{k_{j}}-\xi^{k_{i}})\neq 0,
\end{eqnarray*}
and
\begin{eqnarray*}
  b &=& \det\left(
                                     \begin{array}{cccc}
                                       1 & 1 & \cdots & 1 \\
                                       \xi^{k_{h}} & \xi^{k_{h+1}} & \cdots & \xi^{k_{2\delta-2}} \\
                                       \vdots & \vdots &\ddots &\vdots \\
                                       \xi^{m k_{h}} & \xi^{m k_{h+1}} & \cdots & \xi^{m k_{2\delta-2}} \\
                                     \end{array}
                                   \right) \\
    &=& \prod_{h \leq i < j \leq 2\delta-2}(\xi^{k_{j}}-\xi^{k_{i}}) \neq 0.
\end{eqnarray*}
Thus $\det(M) \neq 0$ which contradicts to the \textbf{Fact 2}.
   \item[2)] If $0 \leq m \leq \delta-2$:  Note that we have proved that $\bm{e}_{i} \in U$, for $1 \leq i \leq t+1$. We claim that for  each $i$ with $t+2 \leq i \leq 2\delta-1$, we still have
\[\bm{e}_{i} \in U.\]
Without loss of generality, we only prove it for $i=t+2$. We consider the set $S_{t+1}$.  Without loss of generality, we suppose that $S_{t+1}=\{k_{t+2}, k_{t+3}, \ldots, k_{t+m'+1}\}$, i.e., $\rho \mid (k_{t+2}-k_{t+1}), (k_{t+3}-k_{t+1}),\ldots, (k_{t+m'+1}-k_{t+1})$, where $m'=|S_{t+1}|$ ($m'=0$ if $S_{t+1}=\emptyset$). Similarly as Eq. (6), for any $1 \leq i \leq \delta-1$,  we have
\begin{equation*}
  \sum_{j=0}^{\rho-1}\xi^{(i+j(r+\delta-1))(k_{\ell}-k_{t+1})}=\left\{
\begin{aligned}
   \rho\xi^{i(k_{\ell}-k_{t+1})}, & \textnormal{ if }t+1 \leq \ell \leq t+m'+1, \\
 0, &~\ell \textnormal{ otherwise}.
\end{aligned}\right.
\end{equation*}
Thus for any $1 \leq i \leq \delta-1$,
\begin{eqnarray*}
  \bm{w}_{i} &\triangleq& \xi^{-ik_{t+1}}\bm{u}_{i,0}+\xi^{-(i+r+\delta-1)k_{t+1}}\bm{u}_{i,1}+ \cdots + \xi^{-(i+(\rho-1)(r+\delta-1))k_{t+1}}\bm{u}_{i,\rho-1} \\
   &=& (\underbrace{0, \ldots, 0}_{(t+1) \textnormal{ times}}, \rho, \rho\xi^{i(k_{t+2}-k_{t+1})}, \ldots, \rho \xi^{i(k_{t+m'+1}-k_{t+1})},\bm{0}),
\end{eqnarray*}
and $\bm{w}_{i} \in U$. Since $m' \leq m \leq \delta-2$, we may let
\[ B=\frac{1}{\rho}\left(
           \begin{array}{c}
             \bm{w}_{1} \\
             \bm{w}_{2} \\
             \vdots\\
             \bm{w}_{m'+1} \\
           \end{array}
         \right)\]
be a matrix whose rows are belong to $U$. Then the $(t+2)$-th column to the $(t+m'+2)$-th column of $B$ form an $(m'+1) \times (m'+1)$ Vandermonde  matrix which is invertible. It deduces that $\bm{e}_{t+2} \in U$. The claim is proved. At this time, the $2\delta-1$ vectors $\bm{e}_{i}$ all in $U$, i.e., $\dim(U)=2\delta-1$ which is a contradiction.
\end{description}

In each case, it always leads to a contradiction. Thus $d \geq 2\delta$. The proof is completed.
\end{proof}
\begin{remark}
 For any $\delta+1 \leq d \leq 2\delta, r \geq d-\delta+1$ and with other conditions of Theorem 3, let
\[g(x)=(x-1)\prod_{i=1}^{\delta-1}(x^{\rho}-\alpha^{i})\prod_{j=\delta}^{d-2}(x-\gamma^{j}).\]
We can prove similarly that the cyclic code $C$ generated by $g(x)$ is a $q$-ary optimal $(r, \delta)$-LRC of length $n$ and minimum distance $d$.
\end{remark}
\begin{example}
Let $r=4$, $\delta=3$ and $q=7$,  then by Theorem 3, for any $n=6n'$ with $\gcd(n', 42)=1$, there exists a $7$-ary optimal cyclic $(4, 3)$-LRC of length $n$ and minimum distance $6$.
\end{example}

When $\delta=3$, we provide another construction of optimal $(r, 3)$-LRCs with unbounded length in the following theorem, which is just a modification of Theorem 3.

\begin{theorem}
Let $q$ be a prime power and $n$ be an odd integer with $\gcd(n, q)=1$. Let $r \geq 4$ such that $(r+2) \mid \gcd(n, q+1)$ and $\gcd(\frac{n}{r+2}, r+2)=1$. Then there exists a $q$-ary optimal cyclic $(r, 3)$-LRC of length $n$ and minimum distance 6.
\end{theorem}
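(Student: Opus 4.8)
The plan is to carry over the construction of Theorem~3 in the case $\delta=3$, replacing each single linear factor of the generator polynomial by a Frobenius-conjugate pair, as is forced by $(r+2)\mid(q+1)$. Note first that $(r+2)\mid n$ and $n$ odd force $r+2$, hence $r$, to be odd, so in fact $r\ge 5$, and $\rho:=n/(r+2)$ is odd. Let $\xi$ be a primitive $n$-th root of unity and $\alpha=\xi^{\rho}$, a primitive $(r+2)$-th root of unity; since $(r+2)\mid(q+1)$ we get $\alpha^{q+1}=1$, so $\alpha\in\mathbb{F}_{q^{2}}$ and $\alpha^{q}=\alpha^{-1}$. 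As $\gcd(\rho,r+2)=1$, choose integers $a,b$ with $a\rho+b(r+2)=1$ and put $\gamma=\alpha^{a}$, so that $\gamma^{\rho}=\alpha$ and $\gamma^{q}=\gamma^{-1}$. I would then take
\[
g(x)=(x-1)\,(x^{\rho}-\alpha)(x^{\rho}-\alpha^{-1})\,(x-\gamma^{2})(x-\gamma^{-2}).
\]
Since $(x^{\rho}-\alpha)(x^{\rho}-\alpha^{-1})=x^{2\rho}-(\alpha+\alpha^{-1})x^{\rho}+1$ and $(x-\gamma^{2})(x-\gamma^{-2})=x^{2}-(\gamma^{2}+\gamma^{-2})x+1$ have coefficients fixed by the Frobenius $t\mapsto t^{q}$ of $\mathbb{F}_{q^{2}}$, we get $g(x)\in\mathbb{F}_{q}[x]$; and because $r+2\ge 7$, the five root sets $\{1\}$, $L_{1}$, $L_{-1}=L_{r+1}$, $\{\gamma^{2}\}$, $\{\gamma^{-2}\}$ (with $L_{\ell}$ as in Proposition~1) are pairwise disjoint sets of $n$-th roots of unity, so $g(x)\mid(x^{n}-1)$, $\deg g=2\rho+3$, and the cyclic code $C$ generated by $g$ has dimension $k=r\rho-3$.

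Next I would check locality and the upper bound. The roots of $x^{\rho}-\alpha$ and of $x^{\rho}-\alpha^{-1}$ are precisely $L_{1}$ and $L_{-1}$, whose indices $-1<1$ form an arithmetic progression with common difference $2$; as $n$ is odd, $\gcd(2,n)=1$, so Proposition~1 shows $C$ has $(r,3)$-locality. Since $r\ge 4$ we have $\lceil k/r\rceil=\lceil(r\rho-3)/r\rceil=\rho$, whence the bound~\eqref{2} yields $d\le n-(r\rho-3)+1-2(\rho-1)=6$. It remains to prove $d\ge6$.

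Assume $C$ has a nonzero word of weight $w\le5$; after a cyclic shift I may take it to be $c(x)=\sum_{\ell=0}^{w-1}c_{\ell}x^{k_{\ell}}$ with $c_{0}\ne0$, $0=k_{0}<k_{1}<\cdots<k_{w-1}<n$, and $g(x)\mid c(x)$. The key reduction is to show $\rho\mid k_{\ell}$ for every $\ell$. Writing $L_{1}=\xi\mu_{\rho}$ and $L_{-1}=\xi^{-1}\mu_{\rho}$, where $\mu_{\rho}$ is the group of $\rho$-th roots of unity, the vanishing of $c$ on these two cosets --- after the averaging over $\mu_{\rho}$ used in the proof of Lemma~2 --- gives $\sum_{k_{\ell}\equiv a}c_{\ell}\xi^{k_{\ell}}=\sum_{k_{\ell}\equiv a}c_{\ell}\xi^{-k_{\ell}}=0$ for each residue $a$ modulo $\rho$. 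If some residue class held exactly one support exponent, the first equation would force its coefficient to be $0$; if a class held exactly two, say $k_{\ell_{1}}\ne k_{\ell_{2}}$, the $2\times2$ homogeneous system would have determinant $\xi^{k_{\ell_{1}}-k_{\ell_{2}}}-\xi^{k_{\ell_{2}}-k_{\ell_{1}}}$, nonzero because $n$ is odd, again forcing both coefficients to $0$. Hence every residue class meets the support in $0$ or at least $3$ elements, so, as $w\le5$, the whole support lies in a single class, of size $w\ge3$; since $k_{0}=0$ lies in it, that class is $0$, i.e.\ $\rho\mid k_{\ell}$ for all $\ell$.

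With this established, write $c(x)=\tilde{c}(x^{\rho})$, where $\tilde{c}(y)=\sum_{\ell}c_{\ell}y^{k_{\ell}/\rho}\in\mathbb{F}_{q}[y]$ is nonzero, has weight $w\le5$, and satisfies $\deg\tilde{c}<n/\rho=r+2$. Using $\xi^{\rho}=\gamma^{\rho}=\alpha$, the vanishing of $c$ at $1$, on $L_{1}$ and $L_{-1}$, and at $\gamma^{2},\gamma^{-2}$ turns into $\tilde{c}(1)=\tilde{c}(\alpha)=\tilde{c}(\alpha^{-1})=\tilde{c}(\alpha^{2})=\tilde{c}(\alpha^{-2})=0$. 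Thus $\tilde{c}$ is a nonzero word of weight at most $5$ in the cyclic code of length $r+2$ over $\mathbb{F}_{q}$ whose defining set contains the five consecutive powers $\alpha^{-2},\alpha^{-1},1,\alpha,\alpha^{2}$ of the primitive $(r+2)$-th root $\alpha$; by Lemma~1 that code has minimum distance at least $6$, which is the desired contradiction. Hence $d=6$. I expect the two delicate points to be: first, the bookkeeping that $g(x)$ genuinely lies in $\mathbb{F}_{q}[x]$ and has $2\rho+3$ distinct roots once the single factors are replaced by conjugate pairs (this is where $r+2\ge7$ is used); and, more substantively, the reduction of an arbitrary low-weight codeword to one supported on multiples of $\rho$ --- this is exactly where $n$ being odd is needed, and it plays the role of, and short-cuts, the case analysis on the number of $k_{\ell}$ divisible by $\rho$ used in the proof of Theorem~3 (which one could also imitate here directly).
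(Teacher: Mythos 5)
Your proof is correct, and the part that matters --- showing $d\geq 6$ --- takes a genuinely different and noticeably cleaner route than the paper's. The paper imitates Theorem~3 directly: it sets $t$ to be the number of $k_{i}$ ($1\leq i\leq4$) divisible by $\rho$, splits into the cases $t=0$, $1\leq t\leq2$, $t=3$, $t=4$, and in each case exhibits an explicit $5\times5$ matrix built from $\bm 1,\bm u_{0},\bm v_{0},\bm\gamma_{1},\bm\gamma_{2}$ and averaged vectors, then asserts its determinant is nonzero to contradict the rank bound on $U$. You instead prove a structural fact up front: using the vanishing of $c$ on the two cosets $\xi^{\pm1}\mu_{\rho}$ and Fourier averaging, every residue class modulo $\rho$ meets the support of $c$ in $0$ or at least $3$ places --- the class with exactly one element is killed by the single equation, the class with exactly two by a $2\times2$ determinant $\xi^{k_{\ell_1}-k_{\ell_2}}-\xi^{k_{\ell_2}-k_{\ell_1}}$, whose nonvanishing is exactly where the hypothesis ``$n$ odd'' is consumed. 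Since $w\leq5$, the support collapses to one class, which contains $k_{0}=0$ and is hence $0\bmod\rho$; then $c(x)=\tilde c(x^{\rho})$ with $\tilde c$ a nonzero polynomial of degree $<r+2$ and weight $\leq5$ vanishing at $\alpha^{-2},\alpha^{-1},1,\alpha,\alpha^{2}$, and the BCH bound on the length-$(r+2)$ code kills it. What this buys you: the case analysis disappears, the use of $n$ odd is isolated to one transparent line, and the final contradiction is a textbook BCH bound rather than a hand-verified determinant; the paper's approach is more mechanical but lines up visibly with the template of Theorems~2 and~3. Both are correct; yours is the shorter and more conceptual argument.
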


\begin{proof}
  Let $\xi \in \mathbb{F}_{q^{s}}$ be
a primitive $n$-th root of unity and $\alpha = \xi^{\rho}$ be a primitive $(r+2)$-th root of unity, where $\rho=\frac{n}{r+2}$. Since $(r+2) \mid (q+1)$, $\alpha^{q+1}=(\xi^{\rho})^{q+1}=(\xi^{n})^{\frac{q+1}{r+2}}=1$, i.e., $\alpha^{q}=\alpha^{-1}$, thus $\alpha \in \mathbb{F}_{q^{2}}$. Since $\gcd(\rho, r+2)=1$, there exist integers $a, b$, such that $a\rho+b(r+2)=1$. Let $\gamma = \alpha^{a} \in \mathbb{F}_{q^{2}}$. Then $\gamma^{q}=\alpha^{aq}=\alpha^{-a}=\gamma^{-1}$ and
\[\gamma^{\rho}=\alpha^{a\rho}=\alpha^{1-b(r+2)}=\alpha.\]

Let
\[g(x)=(x-1)(x^{\rho}-\alpha)(x^{\rho}-\alpha^{-1})(x-\gamma^{2})(x-\gamma^{-2}).\]
Then $g(x) \mid (x^{n}-1)$ since all roots of $g(x)$ are $n$-th roots of unity and they are distinct. Put $g(x)=\sum_{i=0}^{m}g_{i}x^{i}$, then
\begin{eqnarray*}
  \sum_{i=0}^{m}g^{q}_{i}x^{i} &=& (x-1)(x^{\rho}-\alpha^{q})(x^{\rho}-\alpha^{-q})(x-\gamma^{2q})(x-\gamma^{-2q}) \\
   &=& (x-1)(x^{\rho}-\alpha^{-1})(x^{\rho}-\alpha)(x-\gamma^{-2})(x-\gamma^{2}) \\
    &=& g(x).
\end{eqnarray*}
Thus $g_{i}^{q}=g_{i}$, $g(x)$ is a polynomial over $\mathbb{F}_{q}$. Let $C$ be the cyclic code with generator polynomial $g(x)$. Then the dimension of $C$ is $k=n-\deg(g(x))=n-(2\rho+3)=r\rho-3$. Note that the set of the roots of $g(x)$ contains $L_{1} \bigcup L_{-1},$
where
 \[L_{\ell}=\{\xi^{i} \mid i\textnormal{ mod }(r+2) =\ell \}, \ell=1, -1.\]
Since $n$ is odd, $C$ has $(r, 3)$-locality from Proposition 1.  Note that
$\lceil\frac{k}{r}\rceil=\lceil\frac{r\rho-3}{r}\rceil=\rho$ since $r \geq 4$. By the bound
(2),
\[d \leq n-(r\rho-3)-2(\rho-1)+1=6.\]
The theorem then follows from the following claim:
\[ \textbf{Claim}: d \geq 6.\]
The method of the proof of this claim is completely similar to Theorem 3. So we leave the proof in Appendix.

\end{proof}

\begin{example}
Let $r=4$, $\delta=3$ and $q=5$,  then by Theorem 4, for any $n=6n'$ with $\gcd(n', 30)=1$, there exists a $5$-ary optimal cyclic $(4, 3)$-LRC of length $n$ and minimum distance $6$.
\end{example}

\section{Conclusion}

In this paper,  we  construct several families of optimal $(r, \delta)$-LRCs via cyclic codes. In particular, for any $\delta +1 \leq d \leq 2\delta$, there always exists a $q$-ary optimal cyclic $(r, \delta)$-LRC with minimum distance $d$ and  unbounded length, that is the length of the code is independent of the alphabet size $q$. Recently, when the minimum distance $d \geq 5$, V. Guruswami et al. \cite{17} proved that the code length $n$ of a $q$-ary optimal $r$-LRC is upper bounded by $\mathcal{O}(dq^{3})$ (roughly). Thus, it is interesting to study the upper bound of the length of a  $q$-ary optimal $(r, \delta)$-LRC with minimum distance $d \geq 2\delta + 1$ in the future work.

\vskip 3mm \noindent {\bf Acknowledgments}  This research is supported by the 973 Program of China (Grant No. 2013CB834204), the National Natural Science Foundation of China (Grant No. 61571243), and the Fundamental Research Funds for the Central Universities of China.

 \section*{Appendix: Proof of the Claim in Theorem 4}
By contradiction, we suppose $d \leq 5$. Then there exists a nonzero polynomial $c(x)=\sum_{j=0}^{4}c_{j}x^{k_{j}}$ with $c_{0} \neq 0$ and $0 < k_{1}< \cdots < k_{4} <n$, such that $g(x) \mid c(x)$.
Let
\[\bm{1}=(1,1,1,1,1),\]
\[\bm{\gamma}_{1}=(1, \gamma^{2k_{1}},\gamma^{2k_{2}},\gamma^{2k_{3}},\gamma^{2k_{4}}),\]
\[\bm{\gamma}_{2}=(1, \gamma^{-2k_{1}},\gamma^{-2k_{2}},\gamma^{-2k_{3}},\gamma^{-2k_{4}}),\]
\[\bm{u}_{i}=\left(1,\xi^{(1+i(r+2))k_{1}},\xi^{(1+i(r+2))k_{2}}, \xi^{(1+i(r+2))k_{3}} ,\xi^{(1+i(r+2))k_{4}}\right),\]
and
\[\bm{v}_{i}=\left(1,\xi^{(-1+i(r+2))k_{1}},\xi^{(-1+i(r+2))k_{2}}, \xi^{(-1+i(r+2))k_{3}}, \xi^{(-1+i(r+2))k_{4}}\right),i=0, 1, \ldots, \rho-1,\]
be the vectors in $\mathbb{F}^{5}_{q^{s}}$.
Let $U$ be the vector space spanned by the vectors $\bm{1}, \bm{\gamma}_{1}, \bm{\gamma}_{2}, \bm{u}_{i}$ and $\bm{v}_{i}$ ($i=0, 1, \ldots, \rho-1$) over $\mathbb{F}_{q^{s}}$. Similar to Theorem 2, we have
\begin{center}
\textbf{Fact 3}: Let $M$ be a $5\times 5$ matrix whose row vectors belong to $U$, then $\det(M)=0$.
\end{center}

Now, assume that among 4 integers $k_{i}$, $t$ integers are divisible by $\rho$ and the rest $4-t$ integers are not divisible by $\rho$. Then $0 \leq t \leq 4$. Without loss of generality, we suppose that $\rho \mid k_{1}, \ldots, k_{t}$ and $\rho \nmid k_{t+1}, \ldots, k_{4}$.

 \textbf{Case (i)}: $t=0$. Then for any $1\leq j \leq 4$,
 \[\sum_{i=0}^{\rho-1}(\xi^{1+i(r+2)})^{k_{j}}=0.\]
 Note that for any $0 \leq i \leq \rho-1$, $c(\xi^{1+i(r+2)})=0$. Thus
 \[0=\sum_{i=0}^{\rho-1}c(\xi^{1+i(r+2)})=\rho c_{0},\]
 which leads to $c_{0} = 0$, contradiction.

\textbf{Case (ii)}: $1 \leq t \leq 2$. Let
\[\bm{u} \triangleq \bm{u}_{0}+\bm{u}_{1}+ \cdots +\bm{u}_{\rho-1}, \]

\[\bm{v} \triangleq \bm{v}_{0}+\bm{v}_{1}+ \cdots +\bm{v}_{\rho-1} \]
and
    \[ M=\left(
           \begin{array}{c}
             \bm{1} \\
             \bm{u_{0}} \\
             \bm{v_{0}} \\
             \frac{1}{\rho}\bm{u} \\
             \frac{1}{\rho}\bm{v} \\
           \end{array}
         \right) .\]

 \textbf{Case (iii)}: $t=3$. Let
\[\bm{v}' \triangleq \bm{v}_{0}-\frac{1}{\rho}\bm{v}=(0, 0, 0, 0, \xi^{-k_{4}}). \]
Recall that
\[a\rho+b(r+2)=1, \alpha=\xi^{\rho}\textnormal{ and }\gamma=\alpha^{a},\]
we have
\[\gamma^{k_{i}}=(\alpha^{a\rho})^{\frac{k_{i}}{\rho}}=(\alpha^{1-b(r+2)})^{\frac{k_{i}}{\rho}}
=\alpha^{\frac{k_{i}}{\rho}}=(\xi^{\rho})^{\frac{k_{i}}{\rho}}=\xi^{k_{i}}, i=1,2,3.\]
Let
    \[ M=\left(
           \begin{array}{c}
             \bm{1} \\
             \bm{u}_{0} \\
             \bm{v}_{0} \\
             \bm{\gamma}_{1} \\
             \bm{v}' \\
           \end{array}
         \right) .\]

\textbf{Case (iv)}: $t=4$. Similar to Case (iii), for $1 \leq i \leq 4$, we have
\[\gamma^{k_{i}}=\xi^{k_{i}}.\]
Let
    \[ M=\left(
           \begin{array}{c}
             \bm{1} \\
             \bm{u}_{0} \\
             \bm{v}_{0} \\
             \bm{\gamma}_{1} \\
            \bm{\gamma}_{2} \\
           \end{array}
         \right) .\]

In cases (ii)-(iv), it is not hard to verify that $\det(M) \neq 0$, which contradicts to the \textbf{Fact 3}.  Thus $d \geq 6$. The claim is proved.

\begin{thebibliography}{99}
\baselineskip=18pt

\bibitem{1}
P. Gopalan, C. Huang, H. Simitci, and S. Yekhanin, On the locality
of codeword symbols, IEEE Trans. Inf. Theory 58 (11) (2012) 6925-6934.

\bibitem{2}
N. Prakash, G. M. Kamath, V. Lalitha, and P. V. Kumar, Optimal linear
codes with a local-error-correction property, in Proc. Int. Symp. Inf.
Theory (ISIT), Cambridge, MA, U.S.A. (2012) 2776-2780.

\bibitem{3}
Y. Luo, C. Xing, and C. Yuan, Optimal locally repairable codes of distance 3 and 4 via cyclic codes, https://arxiv.org/abs/1801.03623.

\bibitem{4}
I. Tamo and A. Barg, A family of optimal locally recoverable codes,
IEEE Trans. Inf. Theory 60 (8) (2014) 4661-4676.

\bibitem{5}
B. Chen, S.-T. Xia, J. Hao and F.-W. Fu, Constructions of Optimal Cyclic $(r, \delta)$ Locally
Repairable Codes, IEEE Trans. Inf. Theory 64 (4) (2018)
2499-2511.

\bibitem{6}
L. Jin, L. Ma, and C. Xing, Construction of optimal locally repairable codes via
automorphism groups of rational function fields, https://arxiv.org/abs/1710.09638.


\bibitem{7}
X. Li, L. Ma, and C. Xing, Optimal locally repairable codes via ellipitic curves, https://arxiv.org/abs/1712.03744.

\bibitem{8}
I. Tamo, D. S. Papailiopoulos and A. G. Dimakis, Optimal locally repairable codes and connections to
matroid theory, IEEE Trans. Inform. Theory 62 (12) (2016) 6661-6671.

\bibitem{9}
P. Gopalan, C. Huang, H. Simitci and S. Yekhanin, On the locality of codeword symbols, IEEE Trans.
Inf. Theory 58 (11) (2012) 6925-6934.

\bibitem{10}
I. Tamo, A. Barg, S. Goparaju and R. Calderbank, Cyclic LRC codes
and their subfield subcodes, in Proc. Int. Symp. Inf. Theory (ISIT), Hong
Kong, China, (2015) 1262-1266.

\bibitem{11}
N. Silberstein, A. S. Rawat, O. Koyluoglu, and S. Vishwanath, Optimal Locally Repairable Codes
via Rank-Metric Codes, in Proc. Int. Symp.
Inf. Theory (ISIT), Istanbul, Turkey, (2013) 1819-1823.

\bibitem{12}
W. Song, S. Day, C. Yuen, and T. Li, Optimal locally repairable linear
codes, IEEE J. Selected Areas Comm. 32 (2014) 6925-6934.

\bibitem{13}
 D. S. Papailiopoulos and A. G. Dimakis, Locally repairable codes,
IEEE Trans. Inf. Theory  60 (10) (2014) 5843-5855.

\bibitem{14}
O. Kolosov, A. Barg, I. Tamo and G. Yadgar, Optimal LRC codes for all lengths $n \leq q$, https://arxiv.org/abs/1802.00157.

\bibitem{15}
 N. Aydin, I. Siap and D. J. Ray-Chaudhuri, The structure of 1-generator
quasi-twiseted codes and new linear codes, Des. Codes Cryptogr. 24 (2001) 313-326.

\bibitem{16}
 C. Dahl and J. P. Pedersen, Cyclic and pseudo-cyclic MDS codes of
length $q + 1$, Journal of Combinatorial Theory 59 (1) (1992) 130-133.

\bibitem{17}
V. Guruswami, C. Xing and C. Yuan, How long can optimal locally repairable codes be?, to appear
in Proceedings of RANDOM 2018 (see arXiv: https://arxiv.org/abs/1807.01064).

\bibitem{18}
C. Huang, H. Simitci, Y. Xu, A, Ogus, B. Calder, P. Gopalan, J. Li, and S. Yekhanin, Erasure
coding in Windows Azure Storage, In USENIX Annual Technical Conference (ATC), (2012) 15-26.

\bibitem{19}
M. Sathiamoorthy, M. Asteris, D. S. Papailiopoulos, A. G. Dimakis, R. Vadali, S. Chen, and
D. Borthakur, XORing elephants: novel erasure codes for big data, In Proceedings of VLDB
Endowment (PVLDB), (2013) 325-336.

\end{thebibliography}
\end{document}